\newtheorem{theorem}{Theorem}
\newtheorem{lemma}{Lemma}
\newtheorem{example}{Example}
\newtheorem{corollary}{Corollary}
\newtheorem{prop}{Proposition}
\newcommand{\vb}{\mathbf{v}}
\newcommand{\ab}{\mathbf{a}}
\newcommand{\X}{{\bf X}}
\newcommand{\U}{{\bf U}}
\newcommand{\ialphabet}{\Sigma}
\newcommand{\vecV}{\textbf{v}}
\newcommand{\N}{\mathbb{N}}
\newcommand{\defn}[1]{\emph{#1}}
\newcommand{\Lang}{L}
\newcommand{\Parikh}{\mathcal{P}}
\newcommand{\ACzero}{\ensuremath{{\sf AC}^0}}
\newcommand{\FOMon}{{\rm FO}({\sf Mon})}
\newcommand{\LTLMon}{{\rm LTL}({\sf Mon})}
\newcommand{\LTLCP}{{\rm LTL}({\bf C},\mathbf{+})}
\title{Logical Languages Accepted by \\ 
Transformer Encoders with Hard Attention}
\author[1]{Pablo Barcel\'o}
\author[1]{Alexander Kozachinskiy}
\author[2]{Anthony Widjaja Lin}
\author[3]{Vladimir Podolskii}
\affil[1]{Institute for Mathematical and Computational Engineering,  Universidad Cat\'olica de Chile \& IMFD Chile \& CENIA}
\affil[2]{TU Kaiserslautern, Kaiserslautern, Germany \&
Max Planck Institute for Software Systems, Kaiserslautern, Germany}
\affil[3]{Courant Institute of Mathematical Sciences, NewYorkUniversity, NY, USA \& Steklov Mathematical 
Institute of Russian Academy of Sciences, Moscow, Russia}
\date{} 
\begin{document}

\maketitle

\begin{abstract}
We contribute to the study of formal languages that can be recognized by transformer encoders. We focus on two self-attention mechanisms: (1) UHAT (Unique Hard Attention Transformers) and (2) AHAT (Average Hard Attention Transformers). UHAT encoders are known to  recognize only languages inside the circuit complexity class ${\sf AC}^0$, i.e., accepted by a family of poly-sized and depth-bounded boolean circuits with unbounded fan-ins. On the other hand, AHAT encoders can recognize languages outside ${\sf AC}^0$), but their expressive power still lies within the bigger circuit complexity class ${\sf TC}^0$, i.e., ${\sf AC}^0$-circuits extended by majority gates.
We first show a negative result that there is an  ${\sf AC}^0$-language that cannot be recognized by an UHAT encoder. On the positive side, we show that UHAT encoders can recognize a rich fragment of ${\sf AC}^0$-languages, namely, all languages definable in first-order logic with arbitrary unary numerical predicates. This logic, includes, for example, all regular languages from  ${\sf AC}^0$. We then show that AHAT encoders can recognize all languages of our logic even when we enrich it with counting terms. We apply these results to derive new results on the expressive power of UHAT and AHAT up to permutation of letters (a.k.a. Parikh images).
\end{abstract}



\section{Introduction}

Transformers have revolutionized natural language processing by facilitating the efficient and effective modeling of intricate contextual relationships within text \cite{DBLP:conf/nips/VaswaniSPUJGKP17}. 
This remarkable capability has sparked numerous investigations into the potential boundaries of transformers' power \cite{DBLP:journals/tacl/Hahn20,DBLP:conf/acl/YaoPPN20,DBLP:journals/jmlr/PerezBM21,DBLP:conf/icml/WeissGY21,DBLP:journals/tacl/HaoAF22,DBLP:conf/acl/0001C22, DBLP:conf/emnlp/BhattamishraAG20,DBLP:conf/icml/0001CP23}. One natural method for addressing this question is to explore the classes of formal languages that these architectures can recognize. This approach provides an insight into their strengths and limitations. The response to this question naturally relies on the specific features allowed within transformer encoders. These encompass the interplay between encoders and decoders, the kind of functions used for positional encodings and attention mechanisms, and considerations of fixed or unbounded precision, among other factors.

While the capacity of transformers that incorporate both encoders and decoders to recognize languages is well understood today (indeed, such architectures are Turing-complete and can thus recognize any computable language \cite{DBLP:journals/jmlr/PerezBM21}), the expressive power of transformer encoders has not been fully elucidated to date. 
\emph{Unique Hard Attention Transformers (UHAT)} are a class of transformer encoders that has been a subject of many recent papers.
As was shown 
by \cite{DBLP:journals/tacl/HaoAF22},  UHATs
recognize only languages in 
${\sf AC}^0$, i.e., 
recognized by families of Boolean circuits of unbounded fan-in that have constant depth and polynomial size. Intuitively, this means that UHATs are rather weak at ``counting'' (more precisely, reasoning about the number of occurrences of various letters in the input word). For example, consider 
the following two languages: {\em majority} and {\em parity}. The first one corresponds to the set of words over alphabet $\{a,b\}$ for which the majority of positions are labeled by $a$, while the second  checks if the number of positions labeled $a$ is even. That these languages are not in ${\sf AC}^0$ follows from a groundbreaking result in circuit complexity theory \cite{DBLP:conf/focs/FurstSS81,DBLP:journals/apal/Ajtai83}). Hence, 
they are neither accepted by UHATs. 
However, which fragment of the ${\sf AC}^0$ languages can actually be recognized by  UHATs remains an unresolved question. 

We start by showing that not all ${\sf AC}^0$ languages can be accepted by UHATs. This is obtained by combining results from  \cite{DBLP:journals/apal/Ajtai83} and  \cite{DBLP:journals/tacl/Hahn20}. Based on the previous observation, we focus on identifying a rich fragment of 
${\sf AC}^0$ that can in fact be embedded into the class of UHATs.
To achieve this, we use the characterization of 
${\sf AC}^0$ 
as the class of languages expressible in ${\rm FO}({\sf All})$, the extension of first-order logic (FO) with all numerical predicates defined in relation to the linear order of a word \cite{DBLP:books/daglib/0095988}.
 We show that UHATs recognize all languages
definable in ${\rm FO}({\sf Mon})$, the restriction of ${\rm FO}({\sf All})$ with {\em unary} numerical predicates only \cite{DBLP:journals/jcss/BarringtonILST05}.
The logic ${\rm FO}({\sf Mon})$ is highly expressive. Unlike FO, it can express non-regular languages like $\{a^n b^n \mid n > 0\}$. Remarkably, it contains all \emph{regular languages} within ${\sf AC}^0$, which includes examples like $(aa)^*$ --- a language not definable in FO. Additionally, our result subsumes the result of \cite{DBLP:conf/acl/YaoPPN20}, where it is shown that \emph{Dyck languages} of bounded nested depth can be recognized by UHATs. It is not hard to see that these languages are regular and belong to ${\sf AC}^0$, hence they are expressible in ${\rm FO}({\sf Mon})$. Our result also implies that UHAT is expressively more powerful than regular languages modulo letter-permutation (a.k.a. {\em Parikh images} \cite{Parikh,kozen-book}).

To establish the result that UHATs recognize all languages
definable in ${\rm FO}({\sf Mon})$, we take a slightly circuitous route: rather than directly formulating ${\rm FO}({\sf Mon})$ sentences as UHATs, we show that each formula in
${\rm LTL}({\sf Mon})$, the extension of 
{\em linear temporal logic} (${\rm LTL}$) 
\cite{DBLP:books/daglib/0007403-2} 
with arbitrary unary 
numerical predicates, 
can be equivalently represented as an UHAT. 
The proof for ${\rm FO}({\sf Mon})$
 then derives from Kamp's seminal theorem \cite{kamp1968tenselogic}, which establishes the equivalence between languages definable in ${\rm FO}$ and ${\rm LTL}$. The advantage of dealing with ${\rm LTL}$, in contrast to ${\rm FO}$, lies in the fact that all ${\rm LTL}$  formulas are unary in nature, i.e., they are interpreted as sets of positions on a word, unlike ${\rm FO}$ formulas which possess arbitrary arity. This property aligns well with the expressive capabilities of UHATs, facilitating a proof through structural induction.

While the fact that UHAT is in $\ACzero$ implies limited counting abilities of such encoders, recent work has shown that a slight extension of the hard attention mechanism can help in recognizing languages outside ${\sf AC}^0$ \cite{DBLP:journals/tacl/HaoAF22}. Instead of using unique hard attention, this model uses {\em average hard attention} (AHAT), which refers to the idea that the attention
 mechanism returns the uniform average value among all positions that maximize the attention. \emph{To what extent does AHAT enrich the counting ability of UHAT?} In answering this question, we introduce a logic named $\LTLCP$, which is an extension of LTL$({\sf Mon})$ that naturally incorporates  counting features. We show that any language that can be defined within $\LTLCP$ can also be identified by an AHAT. 
The logic $\LTLCP$ can express interesting languages lying outside ${\sf AC}^0$ including majority and parity (as far as we know, it have been shown before that parity can be accepted by an AHAT). More generally, our result implies that AHATs are equipped with a powerful counting ability: all permutation-closed languages over a binary alphabet and all permutation closures of regular languages  (which are in general not context-free) can be recognized by AHATs.

\paragraph{Related work.} There has been very little research on identifying logical languages that can be accepted by transformers. The only example we are aware of is the recent work by \cite{DBLP:conf/icml/0001CP23}, in which a variant of first-order logic with counting quantifiers is demonstrated to be embeddable into transformer encoders with a {\em soft attention} mechanism. The primary distinction between their work and our results is the choice of the attention mechanism. Additionally, the logic examined in their paper does not have access to the underlying word order being considered. This implies that some simple languages, such as $a^* b^*$, which are definable in ${\rm FO}$, are not definable in their logic. 

\paragraph{Proviso.} 
Some of the proofs in the paper are rather technical and lengthy. For such a reason we have relegated them to the appendix. 
 
\section{Background notions and results}
\label{sec:prelim}

\subsection{Transformer encoders}

We utilize a streamlined version of transformers, simplifying the model by abstracting certain features employed in more real-world scenarios. 

An {\em encoder layer} is a function that takes a sequence of vectors, $\mathbf{v}_0, \ldots, \mathbf{v}_{n-1}$, in $\mathbb{R}^d$ as input, where $d \geq 0$. It produces an output sequence of vectors, $\mathbf{v}'_0, \ldots, \mathbf{v}'_{n-1}$, in $\mathbb{R}^e$, with $e \geq 0$. We consider two types of encoder layers: {\em standard} and {\em ReLU}. Standard encoder layers resemble those found in most formalizations of transformer encoders. For the first part of the paper we assume that they employ a 
{\em unique hard} attention mechanism, meaning that a position only attends to the element with the highest attention score (breaking ties arbitrarily). On the other hand, ReLU encoder layers simply apply a ReLU function to the $k$th coordinate of each 
vector $\vb_i$. 
ReLU layers serve as a practical method for encoding logical formulas into transformers. A {\em transformer encoder} is then a concatenation of encoder layers. We define all these notions below.  

\paragraph{Standard encoder layer with unique hard attention.} A standard encoder layer is defined by three affine transformations, $A, B\colon\mathbb{R}^d\to\mathbb{R}^{d}$ and $C\colon \mathbb{R}^{2d}\to\mathbb{R}^{e}$. 
For $i\in\{0, \ldots, n-1\}$, we set
    \[\ab_i \gets \vb_{j_i},\]
where $j_i\in\{0, \ldots, n-1\}$ is the minimum element that maximizes the {\em attention score} 
$\langle A \vb_i, B \vb_j\rangle$ over $j\in\{0, \ldots, n-1\}$. 
The $a_i$s are often known as {\em attention vectors}. 
After that, we set
\[\vb'_i \gets C(\vb_i, \ab_i), \qquad  i = 0, \ldots, n-1.\]




It is useful to note that standard layers can do arbitrary position-wise affine transformations.

\paragraph{ReLU encoder layer.} A ReLU layer is given by $k\in\{1, 2, \ldots, d\}$. It just applies the ReLU 
function to the $k$th coordinate of each vector $\vb_i$. That is, assuming that $\vb_i = (v_i^1,\dots,v_i^d)$, then $\vb'_i \gets (v_i^1,\dots,v_i^{k-1},\max\{0,v_i^k\},v_i^{k+1},\dots,v_i^d)$, for $i = 0, \ldots, n-1$.   
The ReLU function can express the max of two numbers: 
$\max(x, y) = \max(0, x - y) + y$. 
This shows that with a constant number of ReLU layers, we can implement position-wise any function 
which is a composition of affine transformations and max.

\paragraph{Transformer encoder.} A {\em unique hard attention transformer encoder} (UHAT)\footnote{Some of the previous papers, for instance~\cite{DBLP:journals/tacl/HaoAF22}, allow to use in UHAT only rational numbers. We find this too restrictive because functions such as $\cos$ and $\sin$ are widely used in practice. Nevertheless, we stress that our results hold with this restriction, by taking good-enough approximations by rational numbers.} is defined simply as the repeated application of standard encoder layers with unique hard attention and ReLU encoder layers (with independent parameters). 


\subsection{Languages accepted by transformer encoders}

Next, we define how a transformer can be used to accept languages over a finite alphabet. 
This requires extending transformer encoders 
with three features: a function for representing alphabet symbols as vectors (which, for the purposes of this paper, we represent as one-hot encodings), another function that provides information about the absolute positions of these symbols within the input word, and a vector that is used for checking whether the word should be accepted or not. The function that provides information about positions is often referred to as a {\em positional encoding}, and it is essential for recognizing properties of ordered sequences of vectors. In fact, without positional encoding,  encoders treat input sequences as invariant to permutations \cite{DBLP:journals/jmlr/PerezBM21}. 

Consider a finite alphabet $\Sigma$ and let $T$ be an UHAT that takes a sequence of vectors over $\mathbb{R}^{d}$ as input and converts it into a sequence of vectors over $\mathbb{R}^e$.
A language $L \subseteq \Sigma^+$ is {\em accepted} by $T$, if there is an embedding function $f\colon \Sigma \to \mathbb{R}^d$, a positional encoding function $p\colon \mathbb{N}\times\mathbb{N} \to \mathbb{R}^d$, and 
a vector $\mathbf{t} \in \mathbb{R}^e$, such that 
for every $\bar w\in L$ we have $T(\bar w) > 0$, and for every $w\in \Sigma^+\setminus L$ we have $T(\bar w) < 0$. 
Here, $T : \Sigma^+ \to \mathbb{R}$ is defined as follows. Let $\bar w = a_0\ldots a_{n-1}\in\Sigma^n$, and
suppose the output of $T$ when given the input sequence
$f(a_0) + p(0, n), \, \ldots \, , f(a_{n-1}) + p(n-1, n)$ 
is the sequence $\vb_0, \dots, \vb_{n-1}$. Then we set $T(\bar w) = \langle \mathbf{t}, \vb_0 \rangle$.




\subsection{First order logic on words}

We assume familiarity with first-order logic (FO). Let $\Sigma$ be a finite alphabet. A word $\bar w = a_0 \cdots a_{n-1}$ in $\Sigma^+$ is represented as a structure $S_{\bar w}$ whose domain is $\{0,\dots,n-1\}$. This structure includes a binary relation $<$ that is interpreted as the linear order on the domain, and for each symbol $a \in \Sigma$, there is a unary relation $P_a$ containing positions $i = 0,\dots,n-1$ where $a_i = a$.
Given an ${\rm FO}$ {\em sentence} over words, that is, an ${\rm FO}$ formula without free variables, we denote the language of all words $\bar w \in \Sigma^+$ satisfying $S_{\bar w} \models \phi$ as $L(\phi)$. If an $L \subseteq \Sigma^+$ satisfies $L = L(\phi)$, for some ${\rm FO}$ sentence $\phi$, then we say that $L$ is {\em definable in ${\rm FO}$}. 

\begin{example} {\em 
First-order logic (FO) enables us to define certain languages of interest. Here, we present an illustrative example. Initially, we recognize that we can employ ${\rm FO}$ to define a relation ${\sf first}(x) := \neg \exists y (y < x)$ that exclusively holds true at the first position of a word. Correspondingly, we can define a relation ${\sf last}(x) := \neg \exists y (x < y)$ that holds solely at the last position of the word. Moreover, it is possible to define a binary relation ${\sf succ}(x,y) := x < y \wedge \neg \exists z (x < z \wedge z < y)$, which defines 
the successor relation within the domain. With these expressions, we can show that ${\rm FO}$ is capable of defining the language $(ab)^{+}$:
$$\exists x \, \big({\sf first}(x) \wedge P_a(x)\big) \ \wedge \ \exists x \,  
\big({\sf last}(x) \wedge P_b(x)\big) \ \wedge \ \forall x \forall y \, \big({\sf succ}(x,y) \rightarrow (P_a(x) \leftrightarrow P_b(y))\big).$$
That is, the first symbol of the word is an $a$, the last one is a $b$, every $a$ is followed by a $b$, and every $b$ is preceded by an $a$. \qed }
\end{example} 

\subsection{Unary numerical predicates}

It is known that ${\rm FO}$ sentences can only define regular languages. In turn, there are regular languages that are not definable in FO. An example is the language $(aa)^*$, which contains those words formed solely by the symbol $a$ that are of even length. However, there is a straightforward extension of ${\rm FO}$ that can define this language: all we need to do is add unary predicate ${\sf even}(x)$, which holds true at position $i$ in a word if and only if $i$ is even. 
In fact, extending ${\rm FO}$ with the predicate ${\sf even}(x)$ allows us to define the language $(aa)^*$ using the following formula, which indicates that the last symbol in the word satisfies the unary predicate ${\sf even}$: $\forall x P_a(x) \, \wedge \, \forall y ({\sf last}(y) \rightarrow {\sf even}(y))$. 

The extension of ${\rm FO}$ with unary numerical predicates can then be useful for defining languages. 
We define a {\em unary numerical predicate} $\Theta$ as an infinite family of functions
$$\theta_n : \{0,\dots,n\} \to \{0,1\}, \quad \quad n > 0.$$
Given a word $\bar w$ in $\Sigma^+$ of length $n$, for $n > 0$, we have that 
the predicate $\Theta(x)$ holds in position $i$ in $\bar w$
if and only if $\theta_n(i) = 1$ (so far, we do not use the value of $\theta_n$ at $n$ as positions are numbered from $0$ to $n-1$. We will use this value in Section 4). Notice that under our definition, the truth of a unary numerical predicate at position $i$ in the word $\bar w$ depends not only on $i$ but also on the length of the word $\bar w$. As we will explore further, this characteristic is advantageous for defining interesting languages in ${\rm FO}$ extended with arbitrary unary numerical predicates. 
Following the literature, we write ${\rm FO}({\sf Mon})$ for such an extension \cite{DBLP:journals/jcss/BarringtonILST05}. 

\begin{example} {\em 
Consider, for example, the non-regular language $\{a^nb^n \mid n > 0\}$. We show that it can be expressed in
${\rm FO}({\sf Mon})$ with the help of a unary numerical predicate $\Theta(x)$ such that $\theta_n(i) = 1$ iff $n$ is even and $i = n/2-1$. In fact, it suffices to use the formula: 
$$\exists x \, \big(\Theta(x) \, \wedge \, P_a(x) \, \wedge \, \forall y (y < x \rightarrow P_a(y)) \, \wedge \, \forall y (x < y \rightarrow P_b(y)) \big).$$
This formula expresses that the middle point $i$ of $\bar w$ exists, is labeled as $a$, and all positions smaller than $i$ are also labeled $a$, while all positions larger than $i$ are labeled as $b$. 
This example illustrates the significance of unary numerical predicates depending on both the position and the length of the word over which the formula is evaluated. \qed }
\end{example} 

The definition of the language $L(\phi) \subseteq \Sigma^+$ defined by an ${\rm FO}({\sf Mon})$ sentence $\phi$ is analogous to the one we provided for ${\rm FO}$.

\section{${\sf AC}^0$ languages accepted by UHATs}

\subsection{Not all languages in ${\sf AC}^0$ are accepted by UHATs.}

\cite{DBLP:journals/tacl/HaoAF22} proved that languages 
accepted by UHATs belong to  the circuit complexity class 
${\sf AC}^0$ , i.e., the class of languages accepted by families of Boolean circuits 
of unbounded fan-in, constant depth, and polynomial size. 
We combine results by \cite{DBLP:journals/apal/Ajtai83} and  \cite{DBLP:journals/tacl/Hahn20} to show that the opposite is not the case, i.e., there are ${\sf AC}^0$ languages that are not accepted by UHATs. 

As shown in~\cite{DBLP:journals/apal/Ajtai83}, there is an ${\sf AC}^0$-family of circuits $\{C_n\colon\{0,1\}^n\to\{0, 1\}\}_{n\in\mathbb{N}}$ such that for all $n$, the circuit $C_n$ accepts all strings with at at least $2n/3$ ones and rejects all strings with at most $n/3$. Consider a language {\em approximate majority}, consisting of strings accepted by circuits from $\{C_n\}$. This language is in ${\sf AC}^0$ by construction. However, as we state next, 
it cannot be recognized by an UHAT. This result is proved by using a property of UHATs established in \cite{DBLP:journals/tacl/Hahn20}.


\begin{prop} \label{notac0}
     There is no UHAT that accepts the language {\em approximate majority}. 
\end{prop}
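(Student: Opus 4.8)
The plan is to argue by contradiction: assume some UHAT $\mathcal{T}$ accepts \emph{approximate majority}, and derive a contradiction from the hard-attention property of \cite{DBLP:journals/tacl/Hahn20}. First I would reduce the statement to a clean promise problem. Since the circuits $C_n$ accept every input with at least $2n/3$ ones and reject every input with at most $n/3$ ones, any UHAT accepting the language they define must satisfy $\mathcal{T}(w)>0$ for all $w\in\{0,1\}^n$ with at least $2n/3$ ones and $\mathcal{T}(w)<0$ for all $w$ with at most $n/3$ ones, regardless of how $C_n$ behaves on the intermediate ``don't-care'' strings. It therefore suffices to show that no UHAT can separate these two far-apart sets of inputs. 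Note that the upper bound of \cite{DBLP:journals/tacl/HaoAF22} (UHATs are contained in $\ACzero$) is of no use here, precisely because \emph{approximate majority} is itself in $\ACzero$ by construction; the argument must instead exploit the finer structure of unique hard attention.

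Next I would invoke the property of UHATs from \cite{DBLP:journals/tacl/Hahn20}. Informally, unique hard attention routes each position to a single other position, so the value produced at the output position is read off a bounded ``attention tree'' and the acceptance decision is controlled by a bounded number of input positions; quantitatively this yields a strong limit on how sensitive the sign of $\mathcal{T}$ can be to the input. The goal is to turn this into the existence, for infinitely many $n$, of two inputs $w_{\mathrm{light}}$ and $w_{\mathrm{heavy}}$ --- one with at most $n/3$ ones and one with at least $2n/3$ ones --- on which $\mathcal{T}$ returns values of the same sign. Such a pair contradicts the promise immediately, since $w_{\mathrm{light}}$ must be rejected while $w_{\mathrm{heavy}}$ must be accepted.

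To produce this pair I would use an averaging argument. Consider moving from a random light string to a random heavy string by flipping zeros to ones; the attention-tree property says that any fixed position is the unique attention target of only a few others, so a typical single flip perturbs the decision on only a vanishing fraction of inputs. Summing these effects along a monotone path in the hypercube (or integrating over a random coupling of the light and heavy distributions) should show that the sign of $\mathcal{T}$ cannot change from negative to positive as often as the promise demands, forcing the existence of the bad pair.

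The step I expect to be the main obstacle is exactly this bridging. The promise gap is \emph{wide}: $w_{\mathrm{light}}$ and $w_{\mathrm{heavy}}$ differ in $\Theta(n)$ positions, whereas Hahn's control is naturally about the effect of changing one position at a time, and --- crucially --- the set of positions that hard attention ``looks at'' is itself input-dependent, so it can be rerouted by the very flips we perform. Moreover a promise-respecting function may have very low global sensitivity (placing the threshold near the light end makes it almost constant), so a crude influence bound alone cannot rule out every function consistent with $C_n$; the argument has to use the structural, worst-case form of Hahn's property rather than an average-case one, and match its quantitative constant against the $\Theta(n)$ width of the gap. Getting this accounting right, while tracking how the attention support shifts along the path, is the technical heart of the proof.
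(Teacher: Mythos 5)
There is a genuine gap here. The tool you actually need from \cite{DBLP:journals/tacl/Hahn20} is not a sensitivity or influence bound but a \emph{restriction lemma}: for every $\varepsilon>0$ and every depth $L$ there is a constant $c$ such that, for all sufficiently large $n$ and every $L$-layer UHAT $T$ on inputs of length $n$, one can fix some $\varepsilon n$ input bits so that, under this fixation, the output of $T$ is determined by only $c$ of the remaining bits. With this statement the proof is immediate and needs no path or averaging argument: take $\varepsilon=1/10$, apply the lemma, and note that the output of $T$ is then determined by at most $n/10+c$ bits in total. Completing the other $n-n/10-c$ free positions with all $0$s yields a string with at most $n/10+c\le n/3$ ones, which $C_n$ rejects, while completing them with all $1$s yields a string with at least $9n/10-c\ge 2n/3$ ones, which $C_n$ accepts; yet $T$ outputs the same value on both strings, contradicting your (correct) reduction to the promise problem. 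Your proposal never states this lemma, and the substitute you build around it --- bounding the effect of single bit flips along a monotone path from a light to a heavy string --- cannot work, as you yourself observe: a function consistent with the promise (e.g.\ a threshold placed just above $n/3$) can have vanishing single-flip sensitivity almost everywhere, so no accumulation of per-flip bounds forces a sign violation. The ``technical heart'' you defer is therefore not a technicality but the entire argument, and it is resolved by quoting the right form of Hahn's result rather than by repairing the averaging scheme.

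A secondary, smaller point: your framing of Hahn's property as ``each position attends to a single other position, so the decision is controlled by a bounded number of input positions'' is not true as stated without the adversarial fixation --- which positions are attended to is input-dependent, exactly the difficulty you flag. The restriction lemma is precisely the device that freezes enough of the input to make the set of relevant positions static, which is why the paper's proof goes through it.
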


\cite{viola2009approximate} shows that $\{C_n\}$ can be made polynomial-time computable, which implies the existence of a \emph{polynomial-time computable} language from ${\sf AC}^0$ that cannot be accepted by an UHAT.


\subsection{Main result: ${\rm FO}({\sf Mon})$ languages are accepted by UHATs}

Proposition \ref{notac0} tells us that not all ${\sf AC}^0$ languages are accepted by UHATs. In this section, we identify a significant subset of ${\sf AC}^0$ languages that can be accepted by UHATs. To accomplish this, we rely on the characterization of the class ${\sf AC}^0$ as those languages that can be defined in ${\rm FO}$ extended with arbitrary numerical predicates. Our main result establishes that as long as we restrict ourselves to unary numerical predicates, translation into UHATs is possible.


\begin{theorem} \label{main:fo}
Let $\Sigma$ be a finite alphabet and $\phi$ an ${\rm FO}({\sf Mon})$ sentence over words from the alphabet $\Sigma$. There is an UHAT that accepts $L(\phi)$. 
\end{theorem}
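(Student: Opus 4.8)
The plan is to avoid encoding $\FOMon$ sentences directly and instead route through linear temporal logic, exactly as the introduction anticipates. First I would invoke (the monadic-predicate version of) Kamp's theorem to replace the given $\FOMon$ sentence $\phi$ by an equivalent $\LTLMon$ formula $\psi$: since Kamp's theorem establishes the equivalence of $\mathrm{FO}$ and $\mathrm{LTL}$ over any linearly ordered structure equipped with an arbitrary collection of monadic predicates, treating each unary numerical predicate $\Theta$ as just one more monadic symbol yields $\FOMon = \LTLMon$, under the usual convention that $\bar w \models \phi$ iff $\psi$ holds at position $0$ of $\bar w$. It then suffices to build, for every $\LTLMon$ formula $\psi$, an UHAT that outputs at each position an indicator of whether $\psi$ holds there; evaluating this indicator at position $0$ (which is exactly what the acceptance condition reads off via $\langle \mathbf{t}, \vb_0\rangle$) gives acceptance of $L(\phi)$.

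The construction proceeds by structural induction on $\psi$, maintaining the invariant that after processing a subformula $\chi$ the current sequence of vectors contains a distinguished coordinate equal to $1$ at positions where $\chi$ holds and $0$ elsewhere, with all previously computed coordinates preserved (which standard layers can do, since $C(\vb_i,\ab_i)$ may copy $\vb_i$). The base cases are immediate: the indicator for a letter predicate $P_a$ is already present in the one-hot embedding $f(a)$, and the indicator for a unary numerical predicate $\Theta$ can be planted by the positional encoding, since $p(i,n)$ is allowed to be an arbitrary function of both the position $i$ and the length $n$, and $\theta_n(i)$ depends on exactly these data. Boolean connectives are handled position-wise: on $0/1$ values, $\neg x = 1-x$, $x\wedge y = \max(0,x+y-1)$ and $x\vee y = 1-\max(0,(1-x)+(1-y)-1)$ are compositions of affine maps and $\max$, hence realizable with a constant number of ReLU layers as noted in the preliminaries.

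The crux is the temporal step. For the next operator $\X\chi$ I would use a single hard-attention layer whose scores, built from the positional encoding, are engineered so that position $i$ attends uniquely to its successor $i+1$ and copies that position's $\chi$-indicator back, with a ReLU clamp handling the last position. For the until operator $\chi_1 \U \chi_2$ I would exploit the standard reduction that $\chi_1 \U \chi_2$ holds at $i$ iff the \emph{nearest} position $k\ge i$ satisfying $\chi_2 \vee \neg\chi_1$ in fact satisfies $\chi_2$; this lets me replace ``until'' by ``attend to the nearest future decisive position and read off its $\chi_2$-value''. I implement such a nearest-qualifying-future-position query by assigning each position a large score when the decisive marker holds together with a distance-decreasing bias, so that unique hard attention (breaking ties by least index) selects precisely the closest marked position; past operators such as since are symmetric, with future masking replaced by past masking. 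Each temporal operator thus costs a constant number of standard and ReLU layers, and the indicator it produces is again $0/1$, so the induction composes.

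The main obstacle I anticipate is the precise design of the attention scores so that unique hard attention returns exactly the intended nearest future (or past) position satisfying the marker, while correctly handling the boundary cases where no such position exists and the distinguished first and last positions. Getting a single bilinear form $\langle A\vb_i, B\vb_j\rangle$ to simultaneously (i) dominate for marked positions, (ii) rank marked positions by proximity to $i$, and (iii) degrade gracefully when there is no marker, is the delicate part, and it is where the freedom in choosing the positional encoding (including non-rational values such as $\cos$ and $\sin$) gets used. Everything else --- the Boolean post-processing, the copying of coordinates, and the final read-off at position $0$ --- is routine once these attention gadgets are in place.
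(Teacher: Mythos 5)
Your proposal follows essentially the same route as the paper: Kamp's theorem to pass from $\FOMon$ to $\LTLMon$, then structural induction showing each $\LTLMon$ formula is realized position-wise by an UHAT, with one-hot embeddings and the positional encoding handling the base cases, ReLU layers handling Boolean connectives, and a hard-attention gadget that sends each position to the nearest relevant future position for $\X$ and $\U$. Your reduction of $\chi_1\U\chi_2$ to ``attend to the nearest $k\ge i$ satisfying $\chi_2\vee\neg\chi_1$ and read off $\chi_2$'' is the standard correct one (the paper phrases it via the first failure of $\phi$, which implicitly uses $\phi\U\psi\equiv(\phi\wedge\neg\psi)\U\psi$). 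The one substantive step you leave open --- and correctly flag as the crux --- is how to realize a ``distance-decreasing bias toward the nearest marked future position'' as a single bilinear form $\langle A\vb_i,B\vb_j\rangle$. The paper's trick is to put $\cos\left(\pi(1-2^{-i})/10\right)$ and $\sin\left(\pi(1-2^{-i})/10\right)$ in the encoding so that the score contains $\cos\left(\pi(2^{-i}-2^{-j})/10\right)$; since $|2^{-i}-2^{-j}|<2^{-i}$ exactly for $j\ge i$ and is smaller for smaller such $j$, while it is at least $2^{-i}$ for $j<i$, this term alone ranks future positions by proximity and dominates all past positions, and a large additive term ($\pm 10$ gated on parity for $\X$, or on $\neg\phi$ for $\U$) restricts the maximum to the marked positions. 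Without some such explicit construction the induction step is not yet a proof, but your plan is otherwise sound and matches the paper's.
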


Proving this result by induction on ${\rm FO}({\sf Mon})$ formulas, which would be the most natural approach to tackle the problem, turns out to be difficult. The challenge arises because the ${\rm FO}({\sf Mon})$ formulas obtained by induction can have arbitrary arity, and transformer encoders do not seem capable of handling the requirements imposed by such formulas. To address this issue, we take a different approach. We employ Kamp's Theorem, which establishes that the languages definable in ${\rm FO}$ are precisely those that are definable in {\em linear temporal logic} (${\rm LTL}$) \cite{kamp1968tenselogic}. 

\subsection{Using ${\rm LTL}({\sf Mon})$ to prove our main result}

We first explain how ${\rm LTL}$ is defined, as this is crucial to understanding the remainder of the paper.
Let $\Sigma$ be a finite alphabet. ${\rm LTL}$ formulas over $\Sigma$ are defined as follows: if $a \in \Sigma$, then $a$ is an ${\rm LTL}$ formula. Additionally, ${\rm LTL}$ formulas are closed under Boolean combinations. Finally, if $\phi$ and $\psi$ are ${\rm LTL}$ formulas, then $\X \phi$ and $\phi \U \psi$ are also ${\rm LTL}$ formulas. Here, $\X$ is referred to as the {\em next} operator, and $\U$ as the {\em until} operator.

${\rm LTL}$ formulas are unary, i.e., they are evaluated over positions within a word. Let $\bar w = a_0 \cdots a_{n-1}$ be a word in $\Sigma^+$, and let $i = 0, \ldots, n-1$. We define the satisfaction of an ${\rm LTL}$ formula $\phi$ over $\bar w$ at position $i$, written as $(\bar w, i) \models \phi$, inductively as follows (omitting Boolean combinations):
\begin{itemize}
\item 
$(\bar w, i) \models a$ if and only if $a = a_i$, for $a\in \Sigma$.
\item 
$(\bar w, i) \models \X \phi$ if and only if $i < n-1$ and $(\bar w, i+1) \models \phi$. In other words, $\phi$ holds in the next position after $i$ (if such a position exists).
\item 
$(\bar w, i) \models \phi \U \psi$ if and only if there exists a position $j = i, \ldots, n-1$ for which $(\bar w, j) \models \psi$ and such that $(\bar w, k) \models \phi$ for every $k$ with $i \leq k < j$. That is, $\phi$ holds starting from position $i$ until the first position where $\psi$ holds (and a position where $\psi$ holds must exist). 
\end{itemize} 

We can extend ${\rm LTL}$ with unary numerical predicates in the same way we did it for ${\rm FO}$. Formally, we define ${\rm LTL}({\sf Mon})$ as the extension of ${\rm LTL}$ with every formula of the form $\Theta$, for $\Theta$ a unary numerical predicate. We write $(\bar w,i) \models \Theta$ to denote that $\theta_n(i) = 1$, where $n$ is the length of $\bar w$. 
If $\phi$ is an ${\rm LTL}({\sf Mon})$ formula over $\Sigma$, we write $L(\phi)$ for the set of  words $\bar w \in \Sigma^+$ with $(\bar w,0) \models \phi$.  

Kamp's Theorem establishes that for every ${\rm FO}$ sentence $\phi$ there exists an ${\rm LTL}$ formula $\psi$ such that $L(\phi) = L(\psi)$, and vice-versa. It is straightforward to see that this property extends to the logics ${\rm FO}({\sf Mon})$ and ${\rm LTL}({\sf Mon})$. 

\begin{prop} \label{kamp} \cite{kamp1968tenselogic}
For every ${\rm FO}({\sf Mon})$ sentence $\phi$ there exists an ${\rm LTL}({\sf Mon})$ formula $\psi$ such that $L(\phi) = L(\psi)$, and vice-versa. 
\end{prop}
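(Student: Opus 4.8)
The plan is to reduce the statement to the classical Kamp Theorem by absorbing the numerical predicates into the monadic vocabulary. The key observation is that, although a unary numerical predicate $\Theta$ is length-dependent, at the level of a single fixed word it is just a set of positions, i.e.\ an ordinary monadic predicate. Since any ${\rm FO}({\sf Mon})$ sentence (resp.\ ${\rm LTL}({\sf Mon})$ formula) mentions only finitely many numerical predicates $\Theta_1,\dots,\Theta_k$, I would expand the vocabulary with one fresh unary relation symbol $R_1,\dots,R_k$ per predicate and work over the \emph{finite} signature consisting of the linear order $<$, the letter predicates $\{P_a : a\in\Sigma\}$, and $R_1,\dots,R_k$. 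Every word $\bar w\in\Sigma^+$ of length $n$ induces a unique structure $\hat S_{\bar w}$ over this signature: the domain, the order $<$, and the $P_a$ are as in $S_{\bar w}$, while $R_j$ is interpreted as the set $\{\, i : \theta^j_n(i)=1 \,\}$ of positions where $\Theta_j$ holds in $\bar w$.

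For the forward direction, given an ${\rm FO}({\sf Mon})$ sentence $\phi$ using $\Theta_1,\dots,\Theta_k$, let $\hat\phi$ be the ordinary ${\rm FO}$ sentence obtained by rewriting each atom $\Theta_j(x)$ as $R_j(x)$. I would then invoke the finite-word version of Kamp's Theorem over the expanded signature --- that is, the version in which the monadic vocabulary may contain arbitrary unary predicates, not only letter predicates --- to obtain an ${\rm LTL}$ formula $\hat\psi$, built from the atomic propositions $\{P_a\}\cup\{R_j\}$, such that for every finite structure of this signature, $\hat\phi$ holds iff $\hat\psi$ holds at the first position. Replacing every occurrence of the atomic proposition $R_j$ in $\hat\psi$ by the numerical predicate $\Theta_j$ yields an ${\rm LTL}({\sf Mon})$ formula $\psi$. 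Finally I would verify that the two readings agree on each $\bar w$: namely $S_{\bar w}\models\phi$ iff $\hat S_{\bar w}\models\hat\phi$ iff $(\hat S_{\bar w},0)\models\hat\psi$ iff $(\bar w,0)\models\psi$, where the middle equivalence is Kamp's Theorem applied to $\hat S_{\bar w}$ and the outer equivalences are immediate from the definition of $\hat S_{\bar w}$ together with the fact that $R_j$ and $\Theta_j$ are interpreted identically on $\bar w$. This gives $L(\phi)=L(\psi)$. The converse is entirely symmetric: starting from $\psi\in{\rm LTL}({\sf Mon})$, relabel its numerical atoms as fresh propositions $R_j$, apply Kamp's Theorem in the ${\rm LTL}\to{\rm FO}$ direction, and relabel back.

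The only point that needs care --- and what I would flag as the main (modest) obstacle --- is the length-dependence of the numerical predicates. One might worry that, because the interpretation of $R_j$ varies with $n$, a single translation $\hat\phi\mapsto\hat\psi$ cannot be reused across all lengths. This concern is unfounded precisely because Kamp's Theorem delivers a translation valid uniformly over \emph{all} structures of the expanded signature, with no assumption on how the $R_j$ are populated; each word $\bar w$ contributes exactly one such structure $\hat S_{\bar w}$, and the equivalence is applied to it model-by-model. Hence the fact that distinct lengths realize different position-sets for the $R_j$ is immaterial, and the ``finitely many predicates'' observation is exactly what keeps the expanded signature finite so that the classical theorem applies verbatim.
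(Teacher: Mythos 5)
Your argument is correct and is precisely the routine extension the paper has in mind when it states that Kamp's Theorem ``straightforwardly'' lifts to ${\rm FO}({\sf Mon})$ and ${\rm LTL}({\sf Mon})$: the paper gives no explicit proof, and your device of absorbing the finitely many numerical predicates into the monadic signature and applying Kamp's translation uniformly, model-by-model, is the standard way to justify it. Your discussion of why length-dependence is harmless is exactly the right point to flag.
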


Our proof of Theorem \ref{main:fo} is then derived directly from Proposition \ref{kamp} and the following result.

\begin{prop} \label{main:ltl}
Let $\Sigma$ be a finite alphabet and $\phi$ an ${\rm LTL}({\sf Mon})$ formula defined over words from the alphabet $\Sigma$. There is an UHAT $T$ that accepts $L(\phi)$. 
\end{prop}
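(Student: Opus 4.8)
The plan is to prove Proposition~\ref{main:ltl} by structural induction on the ${\rm LTL}({\sf Mon})$ formula $\phi$, maintaining throughout the following stronger invariant: for every subformula $\psi$ of $\phi$, there is an UHAT that, on input $\bar w = a_0 \cdots a_{n-1}$, produces an output sequence $\vb_0, \ldots, \vb_{n-1}$ in which some designated coordinate of $\vb_i$ equals $1$ if $(\bar w, i) \models \psi$ and $0$ otherwise, \emph{simultaneously for all positions} $i = 0, \ldots, n-1$. This ``vector of truth values'' invariant is what makes the induction go through, since it matches the unary nature of ${\rm LTL}$ formulas to the position-wise output structure of transformer encoders. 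At the very end, to pass from such a truth-value computation to acceptance in the sense defined in Section~2, I would use a final standard layer that copies the relevant coordinate of $\vb_0$ (the truth value of $\phi$ at position $0$) and shifts/scales it so that the accept vector $\mathbf{t}$ reads a strictly positive number exactly when $(\bar w, 0) \models \phi$.

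For the base cases: the atomic formula $a \in \Sigma$ is handled directly by the one-hot embedding $f$, so the truth value of $a$ at position $i$ is already available as a coordinate of the input vector and can be isolated by an affine map. A unary numerical predicate $\Theta$ is handled by the positional encoding $p$: I would arrange for $p(i,n)$ to carry the bit $\theta_n(i)$ in a dedicated coordinate, which is legitimate precisely because the acceptance definition allows an arbitrary positional encoding $p \colon \mathbb{N} \times \mathbb{N} \to \mathbb{R}^d$ and because ${\rm LTL}({\sf Mon})$ has only finitely many distinct $\Theta$ appearing in $\phi$. Boolean combinations are routine: since standard layers implement arbitrary affine maps and ReLU layers give us $\max$, any Boolean function of finitely many $0/1$ coordinates can be computed position-wise by a constant number of layers.

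The inductive cases for the temporal operators are where attention actually gets used, and I expect these to be the technical heart of the argument. For $\X \psi$, given by induction a coordinate holding the truth value of $\psi$ at each position, I want position $i$ to read the value at position $i+1$; this is a shift-by-one, which I would implement with a hard-attention head whose attention score is engineered (using the positional information, e.g. a monotone function of the absolute position) so that position $i$ uniquely attends to position $i+1$, with a boundary correction at $i = n-1$ where $\X \psi$ is false. For $\psi_1 \U \psi_2$, the semantics say $(\bar w, i) \models \psi_1 \U \psi_2$ iff there is some $j \geq i$ with $(\bar w, j) \models \psi_2$ and $(\bar w, k) \models \psi_1$ for all $i \le k < j$. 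The natural reformulation is: let $j^\star$ be the \emph{nearest} position $\ge i$ satisfying $\psi_2$; then the Until holds iff $j^\star$ exists and all positions strictly between $i$ and $j^\star$ satisfy $\psi_1$. I would compute this by designing an attention score, built from the $\psi_2$-truth coordinate together with a position-decreasing tie-breaking term, so that position $i$ attends uniquely to the closest $\psi_2$-position at or after $i$, thereby retrieving $j^\star$; a second head (or additional coordinates accumulated via attention) would then need to certify that no $\psi_1$-violating position lies in the open interval $(i, j^\star)$.

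The main obstacle I anticipate is exactly this last certification step inside the Until: unique hard attention returns the value at a single argmax position, so it does not natively aggregate a property (``$\psi_1$ holds everywhere between $i$ and $j^\star$'') over a range. The trick I would pursue is to detect a \emph{violation} rather than verify a conjunction --- that is, to have position $i$ attend to the nearest position $\ge i$ that either satisfies $\psi_2$ or violates $\psi_1$, and then check whether that uniquely-attended position satisfies $\psi_2$ (good) or violates $\psi_1$ (bad). Encoding ``nearest position with property $P$'' as a unique-hard-attention argmax requires combining a large weight on the $P$-coordinate with a strictly monotone penalty on distance so that ties are broken toward the closest qualifying position; getting the affine attention scores $\langle A\vb_i, B\vb_j\rangle$ to realize this ``closest qualifying position'' selection cleanly, uniformly in $n$, and with the correct boundary behavior, is the delicate part of the construction and the place where the positional encoding must be chosen with care.
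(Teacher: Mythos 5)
Your overall strategy is exactly the paper's: a structural induction maintaining the invariant that each subformula is realized \emph{position-wise} as a $0/1$ coordinate, with the one-hot embedding handling letters, the positional encoding handling the finitely many unary numerical predicates occurring in $\phi$, ReLU layers plus affine maps handling Boolean connectives, a final readout of position $0$ for acceptance, and a single unique-hard-attention head handling each temporal operator by retrieving a value from one carefully chosen position (with a separate boundary fix at $i=n-1$ for $\X\phi$). Your reduction of $\phi\U\psi$ to ``attend to the nearest $j\ge i$ that satisfies $\psi$ or violates $\phi$, then check which of the two happened'' is the correct one-lookup characterization; the paper instead attends to the nearest $j\ge i$ violating $\phi$ (defaulting to $n-1$) and checks $\psi$ there, which as literally stated overlooks witnesses $j$ at which $\psi$ already holds before the first $\phi$-violation, so your ``violation-or-witness'' formulation is in fact the safer one.

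The one genuine gap is the step you yourself flag as delicate and then leave unconstructed: realizing ``the closest qualifying position at or after $i$'' as the \emph{unique} argmax of a bilinear score $\langle A\vb_i, B\vb_j\rangle$ with a constant-dimensional positional encoding, uniformly in $n$. This is the actual technical content of the proof and it is not routine: a ``large bonus for qualifying plus a monotone penalty on distance'' is not directly a bilinear form in $(\vb_i,\vb_j)$, and one must also guarantee that positions $j<i$ can never win. The paper's gadget is to put $\cos\left(\pi(1-2^{-i})/10\right)$ and $\sin\left(\pi(1-2^{-i})/10\right)$ into the encoding, so the positional part of the score is $\cos\left(\pi(2^{-i}-2^{-j})/10\right)$; all angles lie in $[-\pi/10,\pi/10]$, so this is maximized when $|2^{-i}-2^{-j}|$ is smallest, which for $j\ge i$ is strictly below $2^{-i}$ and shrinks as $j$ decreases toward $i$, while for $j<i$ it is at least $2^{-i}$ --- hence ``at or after $i$, as close as possible'' falls out automatically. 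A $\pm 10$ term driven by the relevant truth value (or by the parity bit $(-1)^i$, for $\X$) then dominates the cosine and restricts the argmax to qualifying positions. You would also need the small lemma the paper isolates for zeroing out the last position, implemented by a head that always attends to $j=n-1$ (e.g.\ via $\vb_i=1/(i+1)$ and score $-1/((i+1)(j+1))$). With that gadget supplied, your induction closes; without it, the two temporal cases remain unproved.
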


Before proving this result, we make the following important remark regarding the positional encoding $p$ used by $T$ to accept 
$L(\phi)$. On a pair $(i,n) \in \mathbb{N} \times \mathbb{N}$ with $i < n$, we have that $p(i,n)$ is composed of elements $i$,
$\nicefrac{1}{(i+1)}$, $(-1)^i$, $\cos\left(\nicefrac{\pi(1 - 2^{-i})}{10}\right)$,  $\sin\left(\nicefrac{\pi(1 - 2^{-i})}{10}\right)$, and 
$\theta_n(i)$, for every unary numerical predicate $\Theta$
mentioned in $\phi$.

\begin{proof}[Proof of Proposition \ref{main:ltl}]
Let $\phi$ be a formula of $\LTLMon$. 
We say that a UHAT \emph{realizes $\phi$ position-wise} if, given a word $\bar w = a_0\ldots a_{n-1}\in\Sigma^+$,  the UHAT outputs a sequence:
\[\mathbb{I}\{(\bar w, 0)\models \phi\},\,\, \mathbb{I}\{(\bar w, 1)\models \phi\},\ \ldots \ ,\,\, \mathbb{I}\{(\bar w, n-1)\models \phi\};\] 
that is, a binary word indicating for which positions $\phi$ is true on $\bar w$ and for which is false. We show by structural induction that every $\LTLMon$  formula is realizable position-wise by some UHAT. 

Let us consider first the base cases. If $\phi = a$, for some $a\in \Sigma$, our goal is to obtain a sequence:
\[\mathbb{I}\{a_0 = a\},\,\,\mathbb{I}\{a_1 = a\},\ \ldots \ ,\,\,\mathbb{I}\{a_{n-1} = a\}.\]
This can easily be achieved by using a one-hot encoding as the embedding function. In turn, if $\phi = \Theta$, for $\Theta$ a unary numerical predicate, then $\phi$ can be realized position-wise using the corresponding positional encoding $p(i, n) = \theta_n(i)$.

We continue with Boolean combinations.
They can be implemented with a composition of ReLU layers and point-wise affine transformation:
$\lnot x =  1 - x$ and $x\lor y =  \frac{\max\{2x - 1, 2y - 1\} +  1}{2}$. 

For the cases when our formula is of the form $\X \phi$ or $\phi\U\psi$, we need the following lemma. 

    

\begin{lemma}
\label{last_pos}
    There is an UHAT
    that transforms each $x_0, \ldots, x_{n-1}\in\{0, 1\}$ as follows:
    \[x_0, \ldots, x_{n - 2}, x_{n-1} \mapsto x_0, \ldots, x_{n - 2}, 0.\]
\end{lemma}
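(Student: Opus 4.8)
The plan is to realize the map $x_0,\dots,x_{n-1}\mapsto x_0,\dots,x_{n-2},0$ with a single unique-hard-attention layer that lets every position identify the last position $n-1$, followed by a constant number of ReLU layers that zero out exactly that position. The crucial point is that, as in the positional encoding fixed for Proposition~\ref{main:ltl}, the raw index $i$ is available as a coordinate of the vector $\vb_i$ at each position; consequently the last position is simply the one whose index is maximal, and we can detect it by comparison.

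First I would design the attention layer so that every position attends to the position of maximal index. Concretely, I would pick the affine maps $A,B$ so that the attention score $\langle A\vb_i, B\vb_j\rangle$ equals (a positive multiple of) the index $j$: take $A\vb_i$ to be a constant vector and $B\vb_j$ to read off the index coordinate of $\vb_j$. Since the maximum of $j$ over $\{0,\dots,n-1\}$ is attained uniquely at $j=n-1$, the tie-breaking rule plays no role and every position attends to the last position, so each attention vector $\ab_i$ carries the value $m:=n-1$. Using the post-attention affine map $C$, I would then form at each position a vector holding the original bit $x_i$, the own index $i$, and the retrieved value $m$.

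Next I would detect the last position by computing $\delta_i := m-i = (n-1)-i$, which is affine and satisfies $\delta_i\geq 0$, with equality iff $i=n-1$. One ReLU layer then produces the indicator $e_i := \max\{0,\,1-\delta_i\}$, which equals $1$ when $i=n-1$ and $0$ otherwise, since $\delta_i\geq 1$ for every $i<n-1$. Finally, because $x_i,e_i\in\{0,1\}$, a further ReLU layer outputs $\max\{0,\,x_i-e_i\}$: this is $0$ at the last position (where $e_i=1$) and $x_i$ everywhere else, which is exactly the required transformation; the edge case $n=1$ is handled correctly, as the lone position is the last and gets zeroed.

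I expect the only delicate point to be the argument that the attention layer correctly singles out the last position: one must verify that the chosen score is a function of $j$ alone, independent of $i$, and strictly increasing in $j$, so that the unique maximizer is $n-1$ and no spurious ties can divert the attention elsewhere. The remaining steps are routine encodings of $\max$ and affine combinations by ReLU layers, exactly as noted in the preliminaries, so no additional machinery is needed.
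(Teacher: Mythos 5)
Your proposal is correct and follows essentially the same route as the paper's proof, which likewise transmits the last index to every position via a hard-attention layer whose score is maximized at $j=n-1$ (the paper uses the score $-1/((i+1)(j+1))$ where you use a multiple of $j$ itself, and it also notes the shortcut of simply taking a positional encoding depending on $n$) and then zeroes out the last position with a position-wise composition of affine maps and ReLU. The remaining details (uniqueness of the maximizer, the indicator via $\max\{0,1-\delta_i\}$, and the final $\max\{0,x_i-e_i\}$) are all sound.
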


Let us assume now that our formula is of the form $\X \phi$. It is enough to design a unique hard attention layer in which attention is always maximized at the next position. More precisely, we  construct an UHAT that outputs a sequence of vectors $\vb_1, \ldots, \vb_n\in\mathbb{R}^3$, and a linear transformation $A\colon\mathbb{R}^3\to\mathbb{R}^3$, such that $\arg\max_{j\in\mathbb{N}}\langle A \vb_i, \vb_j\rangle = \{i + 1\}$, 
    for $i = 0, \ldots, n - 2$. This will allow us to ``send'' $\mathbb{I}\{(\bar w,i+1)\models \phi\} =  \mathbb{I}\{(\bar w,i)\models \X\phi\}$  
    to the $i$th position, for $i = 0, \ldots, n - 2$. It only remains then to apply Lemma \ref{last_pos} to obtain $0 = \mathbb{I}\{(\bar w,n-1)\models \X\phi\}$ at the last  position.

    Using our positional encoding  and an affine position-wise transformation, we can obtain:
    \[\vb_i = \Big(\cos\left(\frac{\pi(1 - 2^{-i})}{10}\right), \,\, \sin\left(\frac{\pi(1 - 2^{-i})}{10}\right), \,\,(-1)^i\cdot 10 \Big).\]
Let $A$ be a linear transformation that reverses the third coordinate. Observe that:
\[\langle A\vb_i, \vb_j\rangle =\cos\left(\frac{\pi(2^{-i}-2^{-j})}{10}\right) + (-1)^{i + j + 1}\cdot 10.\]

We claim that, for a fixed $i$, this quantity is maximized at $j = i + 1$. First, those $j$s that have the same parity as $i$ (in particular, $j = i$) cannot achieve the maximum because the second term is $-10$. For $j$s with a different parity, we have $\langle A \vb_i, \vb_j\rangle = \cos\left(\nicefrac{\pi(2^{-i}-2^{-j})}{10}\right) + 10$. Since all angles are in $[-\pi/10, \pi/10]$, this quantity is maximized when $|2^{-i} - 2^{-j}|$ is minimized. For $j < i$, the last quantity is at least $2^{-i}$, and for $j > i$, the minimum of this quantity is $2^{-i -1}$, achieved at $j  = i + 1$.

Let us finally assume that our formula is of the form $\phi \U \psi$. 
For a given $i = 0, \ldots, n-1$, let $j_i$ be the minimal $j\in\{i, \ldots, n-1\}$ such that $(\bar w,j)\not\models\phi$, and if no such $j$ exists, $j_i = n-1$. Observe that $(\bar w, i)\models$ $\phi\U\psi$ if and only if $(\bar w, j_i)\models \psi$. To show the lemma, it is enough to create a unique hard attention layer, where for every position $i$ the attention is maximized at $j_i$.

         Due to the Lemma \ref{last_pos}, we may assume, without loss of generality, that $(\bar w, n-1)\not\models \phi$.
         Then for every $i$, there exists at least one $j\in\{i, \ldots, n-1\}$ such that $(\bar w,j)\not\models\phi$, and then $j_i$ can be defined as the minimal such $j$, without any clauses.

Using our positional encoding and the induction hypothesis, we can obtain a sequence of vectors $\vb_1, \ldots, \vb_n\in\mathbb{R}^4$ such that:
\[\vb_i = 
\Big(  \cos\left(\frac{\pi(1 - 2^{-i})}{10}\right),\,\,\sin\left(\frac{\pi(1 - 2^{-i})}{10}\right),\,\, 1,\,\, \mathbb{I}\{w,i\models\phi\}\Big).
\]
Consider a linear transformation $B\colon\mathbb{R}^4\to\mathbb{R}^4$ such that 
\[B \vb_i = \Big(  \cos\left(\frac{\pi(1 - 2^{-i})}{10}\right),\,\,\sin\left(\frac{\pi(1 - 2^{-i})}{10}\right),\,\, -10 \cdot\mathbb{I}\{w,i\models\phi\},\,\,0\Big).\]
Observe that 
\[
\langle \vb_i, B\vb_j\rangle = \cos\left(\frac{\pi(2^{-i} - 2^{-j})}{10}\right) - 10\cdot \mathbb{I}\{\bar w,j\models\phi\}.\]
We claim that this expression is maximized at $j = j_i$. First, because of the last term in it, it cannot be maximized at $j$ with $(\bar w, j)\models \phi$. It remains to show that among the $j$s with $(\bar w, j)\not\models \phi$, this quantity is minimized on the minimal $j$ which is at least $i$. In fact, in this case we have $\langle \vb_i, B\vb_j\rangle = \cos\left(\frac{\pi(2^{-i} - 2^{-j})}{10}\right)$. 
All the angles in question are in $[-\pi/10, \pi/10]$, so the cosine is maximized when $|2^{-i} - 2^{-j}|$ is minimized. Now, this absolute value is at least $2^{-i}$ when $j < i$. In turn, this absolute value is smaller than $2^{-i}$ for $j\ge i$, and it is the smaller the smaller is $j$, as required.
\end{proof} 

\subsection{Applications of our main result}

We show two applications of our main result. First, UHATs accept all regular languages in ${\sf AC}^0$. Second, UHATs are strictly more expressive than regular and context-free languages in terms of the acceptance of languages up to letter-permutation. 

\paragraph{Regular languages in ${\sf AC}^0$.}
There is an important fragment of ${\rm FO}({\sf Mon})$ which is interesting in its own right. 
This is the logic ${\rm FO}({\sf Mod})$, i.e., 
the extension of ${\rm FO}$ with unary numerical predicates of the form ${\sf Mod}_p^r$, for $p > 1$ and $0 \leq r \leq p-1$. We have that ${\sf Mod}_p^r(i) = 1$ if and only if $i \equiv r \, ({\rm mod} \, p)$.  In fact, by using a characterization given in \cite{DBLP:journals/jcss/BarringtonCST92}, one can show that the languages definable in  ${\rm FO}({\sf Mod})$ are precisely the regular languages within ${\sf AC}^0$. Then:  

\begin{corollary} 
Let $L \subseteq \Sigma^+$ be a regular language in 
${\sf AC}^0$. There is an UHAT that accepts $L$. 
\end{corollary}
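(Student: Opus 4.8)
The plan is to reduce this corollary to Theorem~\ref{main:fo} via the known logical characterization of the regular languages inside ${\sf AC}^0$. The key observation is that each modular predicate ${\sf Mod}_p^r$ is a unary numerical predicate in the precise sense of our definition: for a word of length $n$ one sets $\theta_n(i) = 1$ if and only if $i \equiv r \pmod{p}$. Consequently ${\rm FO}({\sf Mod})$ is, up to rereading each ${\sf Mod}_p^r$ as the corresponding monadic predicate $\Theta$, a syntactic fragment of ${\rm FO}({\sf Mon})$, and every language definable in ${\rm FO}({\sf Mod})$ is therefore definable in ${\rm FO}({\sf Mon})$.

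Concretely, I would first invoke the characterization of \cite{DBLP:journals/jcss/BarringtonCST92}, which states that a regular language lies in ${\sf AC}^0$ if and only if it is definable in ${\rm FO}({\sf Mod})$. Given a regular $L \subseteq \Sigma^+$ with $L \in {\sf AC}^0$, this produces an ${\rm FO}({\sf Mod})$ sentence $\phi$ with $L = L(\phi)$. Lifting $\phi$ to an ${\rm FO}({\sf Mon})$ sentence requires nothing beyond the identification of predicates noted above, after which I would apply Theorem~\ref{main:fo} directly to $\phi$, obtaining a UHAT $T$ that accepts $L(\phi) = L$. Tracing through the construction, the positional encoding of $T$ simply needs to expose, at each position $i$ in a word of length $n$, the values $\theta_n(i)$ for the finitely many modular predicates occurring in $\phi$, which is exactly what our positional encoding scheme permits.

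The point worth emphasizing is that the entire difficulty sits inside the invoked characterization rather than in our own machinery: the nontrivial content is the algebraic-logical fact that any regular ${\sf AC}^0$ language can be captured using only modular---hence \emph{unary}---numerical predicates, never needing predicates of higher arity. This is precisely the property that keeps us within the reach of ${\rm FO}({\sf Mon})$, where Theorem~\ref{main:fo} applies. Once that fragment is identified, the passage to UHATs is immediate and carries no additional obstacle.
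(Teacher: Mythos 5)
Your proposal is correct and follows exactly the paper's route: it invokes the characterization of \cite{DBLP:journals/jcss/BarringtonCST92} identifying the regular languages in $\ACzero$ with those definable in ${\rm FO}({\sf Mod})$, observes that each ${\sf Mod}_p^r$ is a unary numerical predicate so that ${\rm FO}({\sf Mod})$ sits inside $\FOMon$, and then applies Theorem~\ref{main:fo}. Nothing further is needed.
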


\paragraph{Recognizing regular languages up to letter-permutation.}
Although not all regular languages are accepted by UHATs (e.g. {\em parity}), we can 
use Theorem \ref{main:fo} to show that, up to letter-permutation, UHAT is in fact strictly more powerful than regular and context-free languages.

To formalize our result, we recall the notion of semilinear sets and the Parikh image of a language.
A \defn{linear set} $S$ is a subset of $\N^d$ (for some positive integer $d$, called \defn{dimension}) of the form 
\[
    \vecV_0 + \sum_{i=1}^r \vecV_i \N \ := \ \{ \vecV_0 + \sum_{i=1}^r k_i \vecV_i : k_1,\ldots,k_r \in \N \}
\]
for some vectors $\vecV_0,\ldots,\vecV_r \in \N^d$. A \defn{semilinear set} $S$ over $\N^d$ is a finite union of linear sets over $\N^d$. Semilinear sets have a very tight connection to formal languages through the notion of the \defn{Parikh image} a language $L$ \cite{Parikh}, which intuitively corresponds to the set of ``letter-counts'' of $L$.
More precisely, consider the alphabet $\ialphabet = \{ a_1,\ldots,a_d \}$ and a language $\Lang$ over $\ialphabet$. For a word $w \in \ialphabet$, let $|w|_{a_i}$ denotes the number of occurrences of $a_i$ in $w$. 
The \emph{Parikh image} $\Parikh(\Lang)$ of $\Lang$ is defined to be the set of tuples $\vecV = (|w|_{a_1},\ldots,|w|_{a_d}) \in \N^d$ for some word $w \in L$. For example, if $L = \{ a^n b^n : n \geq 0 \}$ and $L' = (ab)^*$, then
$\Parikh( L ) = \Parikh( L')$. In this case, we say that $L$ and $L'$ are \emph{Parikh-equivalent}. Note that $L'$ is regular, while $L$ is context-free but not regular. This is not a coincidence based on the celebrated Parikh's Theorem (cf. \cite{Parikh}, also see \cite{kozen-book}).

\begin{prop}[\cite{Parikh}]
The Parikh images of both regular and context-free languages coincide with semilinear sets.
\end{prop}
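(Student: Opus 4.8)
The plan is to establish the stated equality of three classes — the Parikh images of regular languages, the Parikh images of context-free languages, and the semilinear sets — by proving just two containments. Since the class of regular languages is contained in the class of context-free languages, we get $\Parikh(\mathrm{reg})\subseteq\Parikh(\mathrm{cf})$ for free, so it suffices to show (i) every semilinear set is the Parikh image of some \emph{regular} language, and (ii) the Parikh image of every \emph{context-free} language is semilinear. Chaining these yields $\mathrm{semilinear}\subseteq\Parikh(\mathrm{reg})\subseteq\Parikh(\mathrm{cf})\subseteq\mathrm{semilinear}$, forcing all inclusions to be equalities and giving the proposition.

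For direction (i), I would argue linear set by linear set. Fix the alphabet $\Sigma=\{a_1,\dots,a_d\}$ and, for a vector $\vecV=(v_1,\dots,v_d)\in\N^d$, write $u_\vecV := a_1^{v_1}\cdots a_d^{v_d}$, a fixed word with $\Parikh(\{u_\vecV\})=\{\vecV\}$. Given a linear set $\vecV_0+\sum_{i=1}^r \vecV_i\N$, the regular expression $u_{\vecV_0}\, u_{\vecV_1}^{*}\cdots u_{\vecV_r}^{*}$ denotes a regular language whose Parikh image is exactly this linear set, since concatenation adds Parikh vectors and each Kleene star contributes its $\vecV_i$ an arbitrary number of times. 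A semilinear set is a finite union of linear sets, and regular languages are closed under finite union, so the corresponding finite union of these regular languages realizes it. This direction is routine.

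The substance lies in direction (ii). Fix a context-free grammar $G=(V,\Sigma,R,S)$ and analyze its derivation trees through their repeated-nonterminal structure. The engine is a pumping decomposition: any root-to-leaf path longer than $|V|$ repeats a nonterminal, and a repetition $A\Rightarrow^*\alpha A\beta$ can be excised from (or re-inserted into) a derivation, changing the yielded word's Parikh vector by exactly the fixed vector $\Parikh(\alpha\beta)$. Iterating the excisions reduces any derivation tree to a \emph{skeleton} in which no nonterminal repeats on any path; skeletons have height at most $|V|$, hence bounded size, so there are finitely many, and each supplies a constant offset $\vecV_0$. The period vectors are the contributions $\Parikh(\alpha\beta)$ of the excised pumps. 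To organize this, I would index a finite family of candidate linear sets by the pair consisting of a skeleton together with the set $U\subseteq V$ of nonterminals that occur in it and are pumpable: the offset is the skeleton's Parikh vector, and the periods are the pump vectors available for the nonterminals of $U$. Since these data range over a finite set, the union of these linear sets is semilinear.

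The main obstacle is proving that this union is \emph{exactly} the Parikh image of the derivable words, in both directions. The ``$\subseteq$'' inclusion — every derivable word's Parikh vector lies in the union — follows by excising pumps down to a skeleton. The delicate ``$\supseteq$'' inclusion requires that every combination $\vecV_0+\sum_i k_i\vecV_i$ be \emph{simultaneously realizable} as the yield of an actual derivation: one must reinsert the chosen pumps, but a pump for $A$ can only be grafted where $A$ already occurs. This is precisely why the linear sets are indexed by a set $U$ of nonterminals that all occur in one common skeleton and are each pumpable, which guarantees that arbitrary multiplicities of their pumps can be inserted independently. Verifying that this indexing captures all words and that the grafting is always valid is the crux of the argument; once it is settled, finiteness of the index set yields semilinearity and completes the proof.
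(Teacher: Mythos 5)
The paper does not prove this proposition: it is Parikh's classical theorem, imported by citation, so there is no in-paper argument to compare yours against. Your direction (i) --- realizing each linear set $\vecV_0 + \sum_{i=1}^r \vecV_i\N$ as the Parikh image of the regular language $u_{\vecV_0}\,u_{\vecV_1}^{*}\cdots u_{\vecV_r}^{*}$ and closing under finite union --- is complete and correct; it is the standard easy half.

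Direction (ii) has a genuine gap, and it sits exactly where you flag the crux. The indexing you propose --- periods given by the pumps for nonterminals that \emph{occur in the skeleton} --- is the version of the argument that is known to fail without further care, and the failure is in your ``$\subseteq$'' inclusion, not only in the grafting. When you excise a pump $A \Rightarrow^{*} \alpha A \beta$ from a large derivation tree, the remaining tree may no longer contain any occurrence of $A$ (or of some nonterminal used inside the excised portion); after iterating down to a skeleton, the vectors of the pumps you removed need not be among the periods attached to that skeleton, so the word's Parikh vector need not lie in the linear set you assigned to it. The standard repair is to stratify by the exact set $U$ of nonterminals used in the \emph{whole} derivation (not just the skeleton): take as offsets the Parikh vectors of all bounded-size derivations using exactly the nonterminals of $U$, as periods the bounded-size pumps using only nonterminals of $U$, and then prove, by induction on tree size, an excision lemma guaranteeing that a large derivation using exactly $U$ always contains a removable pump whose removal still leaves every nonterminal of $U$ present. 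That excision lemma is the actual content of Parikh's theorem; your outline names the difficulty but does not supply it, so the plan as written does not close.
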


In other words, although context-free languages are strict superset of regular languages, they are in fact equally powerful up to letter-permutation.
What about UHATs? We have that they are strictly more powerful than regular and context-free languages up to letter-permutation.
\begin{prop}
    Each regular language has a Parikh-equivalent language accepted by an UHAT. In turn, there is an UHAT language with no Parikh-equivalent regular language.
    \label{prop:parikh_uhat}
\end{prop}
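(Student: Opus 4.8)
The statement has two parts. The plan for the first part (every regular language has a Parikh-equivalent UHAT language) is to avoid working with the original regular language directly and instead exploit the structure of its Parikh image. By Parikh's Theorem, the Parikh image of a regular language $L$ is a semilinear set $S \subseteq \N^d$, a finite union of linear sets $\vecV_0 + \sum_{i=1}^r \vecV_i \N$. The key idea is that for \emph{any} semilinear set $S$ there is a language in canonical form --- intuitively a language of words sorted by letter, i.e. of the shape $a_1^{m_1} a_2^{m_2} \cdots a_d^{m_d}$ with $(m_1,\dots,m_d) \in S$ --- which is Parikh-equivalent to $L$ and which I claim is definable in ${\rm FO}({\sf Mon})$. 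Once this is established, Theorem \ref{main:fo} immediately yields an UHAT accepting it, and since it is Parikh-equivalent to $L$ we are done. So the real work is to show that such a sorted-canonical-form language is ${\rm FO}({\sf Mon})$-definable.

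To carry this out I would first handle a single linear set $\vecV_0 + \sum_{i=1}^r \vecV_i \N$. The ${\rm FO}({\sf Mon})$ sentence must assert two things about a candidate word: (a) the word has the correct \emph{shape}, namely all $a_1$'s precede all $a_2$'s, etc., which is expressible in plain ${\rm FO}$ using the successor and order relations as in the $(ab)^+$ example; and (b) the vector of letter-counts lies in the linear set. Point (b) is where unary numerical predicates become essential. The block boundaries are positions determined by partial sums of the counts, and membership in a linear set amounts to a system of linear equations over the multipliers $k_1,\dots,k_r$ having a solution in $\N$; this is a condition on the counts $(m_1,\dots,m_d)$, each of which is read off as the position of a block boundary. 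I would encode the required arithmetic relationships between these boundary positions using numerical predicates $\Theta$ whose value $\theta_n(i)$ depends on both the position $i$ and the length $n$ of the word --- exactly the feature emphasized in the $\{a^nb^n\}$ example. A finite union of linear sets is then just a disjunction of such sentences, so semilinearity gives us the full ${\rm FO}({\sf Mon})$ sentence.

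For the second part, I need a language accepted by some UHAT whose Parikh image is \emph{not} semilinear, since by Parikh's Theorem any Parikh-equivalent regular language would force the image to be semilinear. The plan is to exhibit a concrete ${\rm FO}({\sf Mon})$-definable (hence UHAT-accepted, via Theorem \ref{main:fo}) language over a binary alphabet whose letter-count set is a non-semilinear subset of $\N^2$. A natural candidate is a language of sorted words $a^m b^n$ subject to a non-linear constraint between $m$ and $n$, such as $n$ being a perfect square or $n = 2^m$; such conditions can be pinned down by a suitable unary numerical predicate $\Theta$ that marks, in a word of length $n$, the position corresponding to the relevant boundary, using the dependence of $\theta_n$ on $n$. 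Since $\{(m,n) : n = m^2\}$ (or the analogous set) is not semilinear, no regular language can be Parikh-equivalent to it.

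The main obstacle I anticipate is part one, specifically verifying rigorously that membership of the count-vector in a linear set --- which is an existential statement over the natural-number multipliers $k_i$ --- can genuinely be captured by unary numerical predicates together with first-order quantification over positions. The subtlety is that ${\rm FO}({\sf Mon})$ predicates are unary and length-dependent but cannot themselves quantify over auxiliary numbers, so I must show that the solvability of the linear system is expressible purely as a (length-and-position dependent) property of the block-boundary positions. I expect this to go through because each numerical predicate has complete freedom to encode any function of $(i,n)$, so one can bake the entire arithmetic check into the predicates; but making the quantifier structure and the interaction between the ${\rm FO}$ skeleton and the predicates precise is the delicate step that warrants the most care.
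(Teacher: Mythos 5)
Your lower bound is sound and essentially matches the paper's: the paper uses $\{a^k : k \text{ prime}\}$ over a unary alphabet with the predicate $\Theta = \{k : k+1 \text{ prime}\}$, while you use a sorted binary-alphabet language with a non-semilinear count relation such as $n=m^2$; since for each word length there is at most one admissible boundary position, a single unary predicate marks it, and either choice works.

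The upper bound, however, has a genuine gap, and it sits exactly where you flagged ``the delicate step.'' Your canonical witness is the sorted language $\{a_1^{m_1}\cdots a_d^{m_d} : (m_1,\ldots,m_d)\in S\}$, and you need membership of the count vector in a linear set to be expressible as a property of the block-boundary positions via unary numerical predicates. When the linear set has at most one degree of freedom given the word length (e.g.\ $\{a^nb^n\}$, or $\{a^kb^kc^k\}$ where both boundaries are determined by $n$), this works. But for a general linear set with two or more periods the boundaries are \emph{not} determined by $n$, and the membership condition becomes a genuinely multi-ary arithmetic relation among several positions --- e.g.\ for $S=\{(k,k,l)\}$ the sorted language is $\{a^kb^kc^l\}$ and one must assert that the second boundary equals twice the first, a binary numerical relation that unary predicates plus ${\rm FO}(<)$ cannot obviously express (and this is precisely the kind of relation that separates $\FOMon$ from ${\rm FO}({\sf All})$). ``Baking the arithmetic into the predicates'' fails because each predicate sees only one position at a time. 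The paper avoids this entirely by choosing a different Parikh-equivalent witness: for $S=\vecV_0+\sum_{i=1}^r\vecV_i\N$ it takes the \emph{regular} language $w_0\cdot w_1^*\cdots w_r^*$, where $w_i$ is any fixed word with Parikh image $\vecV_i$, shows this language lies in $\ACzero$ (closure of $\ACzero$ under concatenation, plus $w^*\in\ACzero$), and then invokes the characterization of regular languages in $\ACzero$ as exactly the ${\rm FO}({\sf Mod})\subseteq\FOMon$-definable ones, so Theorem~\ref{main:fo} applies with no arithmetic encoded in the predicates at all. To repair your argument you would either have to prove the sorted language $\FOMon$-definable (which I believe fails already for $\{a^kb^kc^l\}$) or switch to a witness language, like the paper's, whose structure is enforced by regularity rather than by position arithmetic.
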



\section{Languages beyond ${\sf AC}^0$}

Transformer encoders with unique hard attention can only recognize languages in ${\sf AC}^0$, but a slight extension of the attention mechanism allows to recognize languages lying outside such a class \cite{DBLP:journals/tacl/HaoAF22}. In this section, we show that in fact such an extended model can recognize all languages definable in a powerful logic that extends ${\rm LTL}$ with counting features. This logic can express interesting languages outside ${\sf AC}^0$, such as {\em majority} and {\em parity}.  

\subsection{Average hard attention}

For the results in this section, we consider an extended version of transformer encoders that utilize an {\em average hard attention mechanism} \cite{DBLP:journals/jmlr/PerezBM21,DBLP:journals/tacl/HaoAF22}. Following the literature, we call these AHAT. The difference between UHAT and AHAT only lies at the level of the standard encoder layers, which are now defined as follows. 

\paragraph{Standard encoder layer with average hard attention.} As before, these layers are defined by three affine transformations, $A, B\colon\mathbb{R}^d\to\mathbb{R}^{d}$ and $C\colon \mathbb{R}^{2d}\to\mathbb{R}^{e}$. 
For every $i\in\{0, \ldots, n-1\}$, 
we define $S_i$ as the set of positions $j \in \{0,\dots,n-1\}$ 
that maximize  
$\langle A \vb_i, B \vb_j\rangle$. 
We then set
    \[\ab_i \, \gets \, \Big(\sum_{j \in S_i} \vb_{j}\Big)/|S_i|.\]
After that, we set $\vb'_i \gets C(\vb_i, \ab_i)$, for each  
$i = 0, \ldots, n-1$. 
That is, attention scores under 
average hard attention return the uniform average value among all positions that maximize attention.   

We also use \emph{future positional masking} that allows us to take into account only positions up to $i$. If the future positional masking is used, the sets $S_i$ are defined as sets of positions $j\in\{0, 1, \ldots, i\}$ that maximize $\langle A\vb_i, B\vb_j\rangle$.  Positional masks have been employed on several occasions in theoretical papers~\cite{DBLP:conf/acl/YaoPPN20,DBLP:conf/emnlp/BhattamishraAG20,DBLP:journals/tacl/HaoAF22} as well as in practice, for example, for training GPT-2~\cite{radford2019language}.



\subsection{${\rm LTL}$ extended with counting terms}

We present here $\LTLCP$, an extension of ${\rm LTL}({\sf Mon})$ that allows us to define counting properties over words in a simple manner. This requires the introduction of {\em counting terms} as defined next. 

\paragraph{Counting terms.} 
Suppose $\phi$ is a unary formula. Then 
$\overleftarrow{\#\phi}$ and $\overrightarrow{\#\phi}$ are counting terms. The interpretation of these terms in position $i$ of  a word $\bar w$ of length $n$ is defined as follows: 
\begin{align*}
    \overleftarrow{\#\phi}(\bar w, i) & \ = \ \left|\{j \in\{0, \ldots, i \}\mid (\bar w, j) \models \phi \}\right|,\\ \overrightarrow{\#\phi}(\bar w, i) &\ = \ \left|\{j \in\{i, \ldots, n-1\}\mid (\bar w, j) \models \phi \}\right|.
    \end{align*}
    That is, $\overleftarrow{\#\phi}(\bar w, i)$ is the number of positions to the left of $i$ (including $i$) that satisfy $\phi$, while $\overrightarrow{\#\phi}(\bar w, i)$ is the number of positions to the right of $i$ (including $i$) that satisfy $\phi$. Notice that, for words of length $n$, counting terms take values in $\{0, 1, \ldots, n\}$. 

\paragraph{Counting formulas.}
With counting terms and unary numerical predicates we can create new formulas in the following way. Let $\phi$ be a unary formula and $\Theta$ a unary numerical predicate. We define new formulas $\Theta(\overleftarrow{\#\phi})$ and $\Theta(\overrightarrow{\#\phi})$. The interpretation of such formulas on position $i$ of a word $\bar w$ of length $n$ is as follows: 
$$(\bar w,i) \models \Theta(\overleftarrow{\#\phi}) \ \Leftrightarrow \ \theta_n(\overleftarrow{\#\phi}(\bar w,i)) = 1 \ \ \ \ \ \ \ \ \ \ (\bar w,i) \models \Theta(\overrightarrow{\#\phi}) \ \Leftrightarrow \ \theta_n(\overrightarrow{\#\phi}(\bar w,i)) = 1.$$
That is, the number of positions to the left (resp., right) of $i$ (including $i$) that satisfy $\phi$ satisfies the predicate $\Theta$. As counting terms can take value $n$, the value of $\theta_n$ on $n$ becomes useful. 

We also incorporate into our logic the possibility of checking linear inequalities with integer coefficients over counting terms. More specifically, for any finite set of unary 
formulas $\phi_1, \ldots, \phi_k, \psi_1, \ldots, \psi_k$, and for any coefficients $c_1, \ldots, c_k, d_1, \ldots, d_k\in\mathbb{Z}$ we can create a formula: $$\sum_{j = 1}^ k c_j \cdot \overleftarrow{\#\phi_j} 
\, + \, 
\sum_{j=1}^k d_j \cdot \overrightarrow{\#\psi_j} 
\ \ge \ 0,$$ which is interpreted as follows:
\begin{align*}
(\bar w, i)&\models \sum_{j = 1}^ k c_j \cdot \overleftarrow{\#\phi_j} 
\, + \, 
\sum_{j=1}^k d_j \cdot \overrightarrow{\#\psi_j} 
\, \ge \, 0 
\,   
 \iff & \\  
& 
\sum_{j=1}^k 
c_j\cdot\overleftarrow{\#\phi_j}(\bar w, i) 
\, + \, \sum_{j=1}^k d_j \cdot\overrightarrow{\#\psi_j}(\bar w, i) 
\, \ge \, 0.
\end{align*}

\paragraph{The logic $\LTLCP$.}
We denote by $\LTLCP$ the logic that is recursively defined as follows: 
\begin{itemize} 
\item 
Every formula ${\rm LTL}({\sf Mon})$ is also an $\LTLCP$ formula. 
\item 
Boolean combinations of $\LTLCP$ formulas are $\LTLCP$ formulas.
\item 
If $\phi$ and $\psi$ are $\LTLCP$ formulas, then so are $\X \phi$ and $\phi\U\psi$. 
\item If $\phi$ is an $\LTLCP$ formula and $\Theta$ is a unary numerical predicate, then $\Theta(\overleftarrow{\#\phi})$ and $\Theta(\overrightarrow{\#\phi})$ are $\LTLCP$ formulas. 
\item If $\phi_1, \ldots, \phi_k, \psi_1, \ldots, \psi_k$ are formulas of $\LTLCP$, then $\sum_{j = 1}^ k c_j \cdot \overleftarrow{\#\phi_j} 
\, + \, 
\sum_{j=1}^k d_j \cdot \overrightarrow{\#\psi_j} 
\ \ge \ 0,$  is a formula of $\LTLCP$.
\end{itemize}

\subsection{${\rm LTL}({\bf C})$ definable languages are accepted by encoders} 

Next, we state the main result of this section: languages definable by $\LTLCP$ formulas are accepted by transformer encoders with average hard attention. 
                                       
\begin{theorem}
\label{thm:terms}
Let $\Sigma$ be a finite alphabet and $\phi$ an $\LTLCP$ formula defined over words from the alphabet $\Sigma$. There is an AHAT $T$ that accepts $L(\phi)$. 
\end{theorem}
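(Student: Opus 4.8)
The plan is to prove Theorem \ref{thm:terms} by structural induction on the $\LTLCP$ formula $\phi$, exactly in the style of the proof of Proposition \ref{main:ltl}: I will show that for every subformula $\psi$ there is an AHAT that \emph{realizes $\psi$ position-wise}, i.e., on input $\bar w = a_0\cdots a_{n-1}$ it outputs the bit sequence $\mathbb{I}\{(\bar w,0)\models\psi\},\ldots,\mathbb{I}\{(\bar w,n-1)\models\psi\}$; acceptance of $L(\phi)$ then follows by reading off position $0$. Since $\LTLCP$ contains ${\rm LTL}({\sf Mon})$ and is closed under the same Boolean and temporal operators, the base cases (letters and unary numerical predicates), the Boolean combinations, and the operators $\X$ and $\U$ can be handled exactly as in Proposition \ref{main:ltl}: the attention layers built there always have a singleton $\arg\max$, so average hard attention agrees with unique hard attention on them and those constructions transfer verbatim. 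The genuinely new work is therefore confined to the two counting constructs: the predicate applications $\Theta(\overleftarrow{\#\psi})$ and $\Theta(\overrightarrow{\#\psi})$, and the linear inequalities over counting terms.

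For the counting, the main new tool is average hard attention together with positional masking. Assuming inductively that some AHAT produces the bits $x_j := \mathbb{I}\{(\bar w,j)\models\psi\}$, I would apply a layer whose attention scores are constant, so that every position is a maximizer; under past masking the maximizing set at position $i$ is $\{0,\ldots,i\}$ and the returned average is $\tfrac{1}{i+1}\sum_{j\le i}x_j = \overleftarrow{\#\psi}(\bar w,i)/(i+1)$, the left count \emph{normalized} by the window size. The right counts $\overrightarrow{\#\psi}(\bar w,i)/(n-i)$ are obtained symmetrically, or from the total count (computed by an unmasked uniform layer) minus the left count. Crucially, the positional encoding may expose any function of $(i,n)$, so the quantities $i$, $i+1$, $n$, and $n-i$ are all available position-wise.

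The two constructs are then reduced to exact integer counts. For $\Theta(\overleftarrow{\#\psi})$ I want $\theta_n(c)$ with $c = \overleftarrow{\#\psi}(\bar w,i)$; the natural idea is to read $\theta_n(c)$ off the positional encoding of the position whose index equals $c$, by designing an attention layer that, at position $i$, uniquely attends to position $k=c$. The count $c$ can be as large as $n$, for which there is no position $n$; this is exactly the boundary case for which the value $\theta_n(n)$ must be supplied separately, matching the earlier remark that this value becomes useful. For the linear inequalities I would convert all right counts to left counts via $\overrightarrow{\#\psi}(\bar w,i) = \overleftarrow{\#\psi}(\bar w,n-1) - \overleftarrow{\#\psi}(\bar w,i) + x_i$, reducing the formula to the sign of an integer-linear combination of left counts, which is then tested by an affine map followed by a comparison implemented with ReLU.

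The step I expect to be the main obstacle — and the technical heart of the proof — is recovering an \emph{exact} integer count from the normalized values that average hard attention produces. The averaging intrinsically divides the sum $\overleftarrow{\#\psi}(\bar w,i)$ by the window size $i+1$, and undoing this requires forming the product $(i+1)\cdot\big(\overleftarrow{\#\psi}(\bar w,i)/(i+1)\big)$ of a positional quantity with a data-dependent one. Such a product is neither affine nor a composition of affine maps and $\max$, so it cannot be produced by the post-attention ReLU layers alone; the same obstruction reappears in the bilinear attention scores needed for the index-lookup gadget above. Overcoming this — designing attention layers that deliver exact counts, and thereby the index $c$ and the sign of the count combination, despite the normalization — is where the construction must be carried out carefully, and is the part I would spend the most effort on.
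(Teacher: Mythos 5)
Your overall plan coincides with the paper's proof: the same position-wise structural induction, the same observation that the ${\rm LTL}({\sf Mon})$ constructions carry over because their attention argmaxes are singletons, the same use of masked uniform attention to obtain the normalized count $\overleftarrow{\#\psi}(\bar w,i)/(i+1)$, the same reduction of right counts to left counts via the total, the same index-lookup idea for $\Theta(\overleftarrow{\#\psi})$ with the boundary case $\theta_n(n)$, and the same ReLU test for the linear inequalities. However, you explicitly leave open the one step that is the technical heart of the argument --- recovering the exact integer count from the normalized average --- and you misdiagnose it as an obstruction that ``reappears in the bilinear attention scores.'' It does not: the attention score $\langle A\vb_i, B\vb_j\rangle$ is by definition a bilinear form in the two vectors, so a product of a coordinate of $\vb_i$ (data-dependent, e.g.\ $z_i$) with a coordinate of $\vb_j$ (positional, e.g.\ $j$ or $j^2$) is exactly what an attention layer computes for free. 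This is what the paper exploits, and without it your proof is incomplete at its central point.

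Concretely, the paper sets $z_i = \bigl(\overleftarrow{\#\psi}(\bar w,i)-x_i\bigr)/(i+1)$ (computable from $y_i$ by ReLU and affine maps, since $1/(i+1)$ is in the positional encoding) and defines the attention score $\langle A\vb_i, B\vb_j\rangle = 2j\cdot z_i - j^2/(i+1)$, which is legitimate because $z_i$ and $1/(i+1)$ are coordinates of $\vb_i$ while $j$ and $j^2$ are coordinates of $\vb_j$. Writing $d_i = \overleftarrow{\#\psi}(\bar w,i)-x_i$, this score equals $\bigl(-(d_i-j)^2+d_i^2\bigr)/(i+1)$; the normalization $1/(i+1)$ is a positive factor common to all $j$ and hence irrelevant to the argmax, which is uniquely attained at $j=d_i$ (note $d_i\in\{0,\dots,n-1\}$, which is precisely why one subtracts $x_i$ before denormalizing --- the raw count can equal $n$, for which no position exists). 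Attending there and reading off the coordinate $j$ of the positional encoding delivers the exact integer $d_i$, and adding back $x_i$ gives $\overleftarrow{\#\psi}(\bar w,i)$. The same one-hop lookup (score $2j\cdot c - j^2$, maximized at $j=\min\{n-1,c\}$) then realizes your $\Theta$ gadget. So your skeleton is right, but the denormalization device you defer is not an open difficulty to be ``carried out carefully'' later; it is the specific construction the proof needs, and supplying it is what turns your outline into a proof.
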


 As a corollary to Theorem \ref{thm:terms}, we show that AHATs are rather powerful in counting. To make this claim more formal, we study \emph{permutation-closed} languages, i.e., languages $L$ such that $\bar v \in L$ iff any letter-permutation of $\bar v$ is in $L$. 
 For a language $L$, we write $perm(L)$ to be the permutation-closure of $L$, i.e., $perm(L) = \{ \bar w : \Parikh(\bar w) = \Parikh(\bar v), \text{ for some $\bar v \in L$} \}$. Observe that $perm((abc)^*)$ consists of all strings with the same number of occurrences of $a$, $b$, and $c$; this is not even context-free. Owing to Parikh's Theorem, to recognize $perm(L)$, where $L$ is a regular language, an ability to perform letter-counting and linear arithmetic reasoning (i.e. semilinear set reasoning) is necessary. AHATs possess such an ability, as shown by the following corollary.
 
 \begin{corollary}
    The permutation closure $perm(L)$ of any regular language $L$ is accepted by an AHAT. Moreover, any permutation-closed language over a binary alphabet is accepted by an AHAT. 
    \label{cor:parikh_ahat}
 \end{corollary}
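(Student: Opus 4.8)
The plan is to derive Corollary~\ref{cor:parikh_ahat} from Theorem~\ref{thm:terms} by showing that both classes of languages in question are definable in $\LTLCP$; once that is done, the corollary is immediate since every $\LTLCP$-definable language is accepted by an AHAT. The key observation is that $\LTLCP$ is tailor-made for expressing Parikh-style constraints: the global letter-counts of a word $\bar w = a_0 \cdots a_{n-1}$ are exactly the values of the counting terms $\overleftarrow{\#P_a}$ evaluated at the final position, or equivalently $\overrightarrow{\#P_a}$ evaluated at position $0$. Since $L(\phi)$ is defined by evaluation at position $0$, I would work throughout with the rightward terms $\overrightarrow{\#P_a}(\bar w, 0) = |\bar w|_a$, the total number of occurrences of $a$ in $\bar w$.

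First I would handle the permutation closure of an arbitrary regular language $L$. By Parikh's Theorem (the Proposition attributed to \cite{Parikh} above), $\Parikh(L)$ is a semilinear set $S \subseteq \N^d$, where $d = |\Sigma|$ and the coordinates correspond to the letters $a_1, \ldots, a_d$. A word $\bar w$ lies in $perm(L)$ if and only if the tuple $(|\bar w|_{a_1}, \ldots, |\bar w|_{a_d})$ belongs to $S$. Writing $S$ as a finite union of linear sets, I would express membership in $S$ as a Boolean combination (a disjunction over the linear components) of conditions of the form ``the tuple $(|\bar w|_{a_1}, \ldots, |\bar w|_{a_d})$ lies in a fixed linear set $\vecV_0 + \sum_{i=1}^r \vecV_i \N$.'' The main technical content is encoding membership in a single linear set. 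Membership in $\vecV_0 + \sum \vecV_i \N$ is a system of linear Diophantine constraints on the counts, and such constraints can be captured in $\LTLCP$ by combining (i) the linear-inequality formulas over counting terms, which directly express inequalities $\sum_j c_j \overrightarrow{\#P_{a_j}} \ge 0$, and (ii) the divisibility/modular conditions implicit in the existential quantifiers over the multipliers $k_i$, which can be expressed through unary numerical predicates $\Theta$ of modular type applied to counting terms via the formulas $\Theta(\overrightarrow{\#\phi})$. Concretely, in low dimension a linear set is describable by a conjunction of linear inequalities together with congruence conditions on the counts, and both ingredients are present in $\LTLCP$.

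Second, for a permutation-closed language $L$ over a binary alphabet $\Sigma = \{a, b\}$, I would argue more directly. Such an $L$ is determined entirely by a set $N_n \subseteq \{0, \ldots, n\}$ of admissible $a$-counts for each length $n$: a word $\bar w$ of length $n$ lies in $L$ iff $|\bar w|_a \in N_n$. This is exactly the kind of length-dependent unary condition that the value of $\theta_n$ at an arbitrary point in $\{0, \ldots, n\}$ is designed to capture. Defining a unary numerical predicate $\Theta$ by $\theta_n(m) = 1$ iff $m \in N_n$, the formula $\Theta(\overrightarrow{\#a})$ evaluated at position $0$ holds precisely when $|\bar w|_a \in N_n$, so it defines $L$; here I would lean on the earlier remark that counting terms can take the value $n$ and that $\theta_n$ is defined on all of $\{0, \ldots, n\}$. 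No semilinearity or regularity assumption is needed in this case, since arbitrary unary numerical predicates are permitted.

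The step I expect to be the main obstacle is the faithful encoding of a general linear (hence semilinear) set into $\LTLCP$ for the regular-language part. The subtlety is that a linear set $\vecV_0 + \sum_{i=1}^r \vecV_i \N$ involves existentially quantified multipliers $k_1, \ldots, k_r$, whereas $\LTLCP$ offers only quantifier-free arithmetic over the fixed counting terms; one must therefore eliminate these existentials and re-express membership purely as a combination of linear inequalities and modular congruences on the letter-counts. This requires invoking a normal form for semilinear sets as sets definable in Presburger arithmetic without quantifiers (a finite Boolean combination of linear inequalities and congruences), and then checking that each such atomic constraint is an $\LTLCP$ formula: inequalities are handled by the linear-inequality formulas over counting terms, and a congruence $\sum_j c_j |\bar w|_{a_j} \equiv r \pmod{p}$ is handled by choosing a modular unary numerical predicate $\Theta = {\sf Mod}_p^r$ and applying it, via $\Theta(\overrightarrow{\#\phi})$, to a suitable counting term. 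Verifying that these pieces combine correctly, and that the evaluation-at-position-$0$ convention lines up with global counts, is the routine but care-demanding remainder.
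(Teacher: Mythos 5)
Your proposal follows essentially the same route as the paper's proof: the binary-alphabet case is handled identically via a unary numerical predicate $\Theta$ with $\theta_n(m)=1$ iff $m$ is an admissible $a$-count, applied as $\Theta(\overrightarrow{\#a})$, and the regular case likewise goes through Parikh's Theorem, quantifier elimination for Presburger arithmetic to reduce semilinear membership to a Boolean combination of linear inequalities and congruences, and then the linear-inequality atoms and modular unary predicates of $\LTLCP$. The obstacle you flag (eliminating the existential multipliers $k_i$) is resolved exactly as in the paper, so the argument is correct and matches.
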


 Both {\em majority} and {\em parity} are permutation-closed and are over a binary alphabet. Hence, by the previous result, they are both accepted by AHATs. While for {\em majority} this was known \cite{DBLP:journals/tacl/HaoAF22}, the result for {\em parity} is new.


\section{Conclusions and future work}

We have conducted an 
investigation 
of the problem of which languages can be accepted by transformer encoders with hard attention. For UHATs, we have demonstrated that while they cannot accept all languages in ${\sf AC}^0$, they can still accept all languages in a 'monadic' version of it defined by the logic ${\rm FO}({\sf Mon})$. Crucial to the proof of this result is the equivalence between ${\rm FO}$ and ${\rm LTL}$, as provided by Kamp's Theorem. In turn, we have shown that AHATs are capable of expressing any language definable in a powerful counting logic, $\LTLCP$, that can express properties beyond ${\sf AC}^0$. This implies, among other things, that the {\em parity} language can be accepted by an AHAT. 

Several interesting problems remain open in our work, especially regarding characterizations of the classes we have studied. To begin, are there languages accepted by UHATs that cannot be defined in ${\rm FO}({\sf Mon})$? Additionally, does there exist a language in the circuit complexity class ${\sf TC}^0$, the extension of ${\sf AC}^0$ with majority gates, that cannot be recognized by AHATs? Lastly, is there a language that can be accepted by an AHAT but cannot be defined in $\LTLCP$?
   
\bibliographystyle{acm}
\bibliography{biblio}

\newpage

\section*{Appendix}

\begin{proof}[Proof of Proposition \ref{notac0}]
 As Hahn showed, for every $\varepsilon > 0$ and $L > 0$ there exists $c \geq 0$ such that, for all larger enough $n$, if we consider as inputs binary strings of length $n$, for every UHAT $T$ consisting of  $L$ layers, there exists a fixation of $\varepsilon n$ input bits such that, under this fixation, the output of $T$ is determined by $c$ unfixed bits ~\cite{DBLP:journals/tacl/Hahn20}. However, it cannot hold for an UHAT recognizing {\em approximate majority}, for example, when $\varepsilon = 1/10$. Regardless of how we fix $n/10 + c$ input bits, if we fix the remaining bits to $0$s, the circuit $C_n$ rejects our string, and if we fix them to $1$s, it accepts our string, even though the output of the UHAT remains unchanged.
 \end{proof} 

\begin{proof}[Proof of Lemma \ref{last_pos}]
    At position $i = 0, \ldots, n-1$, this transformation can be written as follows:
    \[x_i\mapsto x_i - \max\{0, x_i+ i - (n+1)\}.\]
    It can easily be done with ReLU layer, using a positional encoding $p(i, n) = i-n$. However, it can also be done with a positional encoding that does not depend on $n$, for example $p(i) = (i, 1/(i+1))$. We just have to ``transmit'' $n-1$ to every position in the UHAT. For that, it is enough to have a unique hard attention layer, where attention in every position is maximized at $j = n-1$ (which allows that to ``send'' $n$ to every position). For instance, consider $\vb_i = 1/(i+1)$, $A(x) = -x$, and observe that:
        \[\arg\max_{j = 0, \ldots,n-1} \langle A \vb_i, \vb_j\rangle =\arg\max_{j = 0, \ldots,n-1} -\frac{1}{(i+1)(j+1)} = \{n-1\}\]
        for every $i = 0, \ldots, n-1$.
    This finishes the proof of the lemma.
\end{proof}

\begin{proof}[Proof of Proposition \ref{prop:parikh_uhat}]
\textbf{Upper bound:} We first show that every regular language over $\ialphabet = \{a_1,\ldots,a_d\}$ has a Parikh-equivalent language in UHAT. By Parikh's Theorem, the Parikh image of this given regular language is represented by a semilinear set $S$ in dimension $d$.
Our proof employs Theorem \ref{main:fo}. Since $\FOMon$ is closed under disjunction, it suffices to consider only linear sets $S$. Thus, take an arbitrary linear set 
$S = \vecV_0 + \sum_{i=1}^r \vecV_i \N$, where $\vecV_i$ ($i > 0$) is a non-zero vector. We will give a language $L$ over the alphabet of $\ialphabet = \{a_1,\ldots,a_d\}$ definable in $\FOMon$ (thus UHAT-recognizable, by Theorem \ref{main:fo}) such that $\Parikh(L) = S$. We will use the linear set $S = (1,1,0) + (2,0,1)\N$ as a running example.

For $i = 0,\ldots,r$ and $j =1,\ldots,d$, define $v_i^j$ to be the natural number corresponding to the $j$th argument of $\vecV_i$. Define $w_i^j$ to be the string $a_j^{v_i^j}$, i.e., $a_j$ repeated $v_i^j$ times, while $\ell_i$ denotes the ``length abstraction'' of $\vecV_i$, i.e., $\ell_i := \sum_{j=1}^d v_i^j$. Finally, let $w_i$ be the concatenation of $w_i^1,\ldots,w_i^d$. Using our example of $S = (1,1,0) + (2,0,1)\N$, then we have $w_0 = a_1a_2$ and $w_1 = a_1a_1a_3$. We also have $\ell_0 = 2$ and $\ell_1 = 3$. 

Next we define the language $L$ as follows:
\[
L := w_0\cdot w_1^*\cdots w_r^*
\]
Using our running example, $L$ would be $a_1a_2(a_1a_1a_3)^*$.
It is easy to see that $\Parikh(L) = S$.

To show that this language is in $\FOMon$-definable, we demonstrate that it is regular and belongs to $\ACzero$. It is regular because it is defined through concatenation and Kleene star. Since $\ACzero$ is closed under concatenation\footnote{if we have $\ACzero$-circuits $C_1, C_2$ for languages $L_1, L_2$, we can construct an $\ACzero$-circuit $C$ for their concatenation as follows: $C(x_1\ldots x_n) = \bigvee_{i = 1,\ldots, n} (C(x_1\ldots x_i)\land C(x_{i+1}\ldots x_n)) $.} it remains to show that languages of the form $w^*$, where $w$ is a word, are in $\ACzero$. We only have to care about input lengths that are multiples of $|w|$, for other input lengths the language is empty. Then we split the input into blocks of $|w|$ letters. We just need an $\ACzero$-circuit, checking that every block coincides with $w$. For example, this can be done with an AND over blocks of constant-size circuits, checking equality to $w$.

\smallskip
\noindent
\textbf{Lower bound:} An example of a language that is in $\FOMon$ (and so in UHAT) whose Parikh image is not semilinear (and therefore, no Parikh-equivalent regular language) is 
\[
L = \{ a^k : \text{$k$ is a prime number} \}.
\]
Note that $\ialphabet = \{a\}$.
This can be easily defined in $\FOMon$ using the unary predicate $\Theta := \{ k \in \N: \text{$k+1$ is a prime number}\}$ as follows: $\exists x \Theta(x)\land \lnot \exists y > x$.
\end{proof}

\begin{proof}[Proof of Theorem \ref{thm:terms}]
    As before, we are proving that every formula $\phi$ of  $\LTLCP$ can be computed position-wise by some AHAT encoder, via structural induction. We have already shown how to do induction for all operators of ${\rm LTL}({\sf Mon})$. In our proof, attention was always maximized at the unique $j$, and in this case, there is no difference between unique and average hard attention.

    It remains to show the same for operators that are in $\LTLCP$ but not in ${\rm LTL}({\sf Mon})$. First, we show that given a formula $\phi$, computed position-wise by some AHAT, there is also an AHAT that computes $\overleftarrow{\#\phi}$ and $\overrightarrow{\#\phi}$ position-wise.

    Using future positional masking and equal weights, we can compute at position $i$ the quantity:
    \[ y_i = \frac{\phi(w, 0) + \ldots + \phi(w, i)}{i+1} = \frac{\overleftarrow{\#\phi}(w, i)}{i+1}, \qquad i = 0, 1,\ldots, n -1.\]
    Next, we have to compute 
    \[z_i = \frac{\big(\overleftarrow{\#\phi}(w, i) -\phi(w, i)\big)}{i+1}.\]
    This can be achieved as follows:
    \[z_i = y_i - \frac{\phi(w, i)}{i+1} = y_i - \min\left\{\phi(w, i), \frac{1}{i+1}\right\}.\]
    As our positional encoding includes $1/(i + 1)$, this computation is a composition of ReLU and affine transformations.
    
    Our next goal is to get rid of the coefficient $1/(i +1)$.
For that, we create a layer with the following attention function:
    \begin{equation}
    \label{get_rid}
    \langle A\vb_i, B\vb_j\rangle = 2j\cdot z_i - \frac{j^2}{i + 1}, \qquad i, j = 0, \ldots, n - 1.
    \end{equation}
    Such attention function is possible because \eqref{get_rid} is a bilinear form of $\vb_i$ and $\vb_j$. Indeed, $\vb_i$ contains $1/(i+1)$ and $\vb_j$ contains $j, j^2$ due to our positional encoding, and also $\vb_i$ contains $z_i = \frac{\big(\overleftarrow{\#\phi}(w, i) -\phi(w, i)\big)}{i+1}$.
    
Denoting $d_i = \overleftarrow{\#\phi}(w, i) -\phi(w, i)$, we get that \eqref{get_rid} is equal to
\[\langle A\vb_i, B\vb_j\rangle = 2j \cdot\frac{d_i}{i+1} - \frac{j^2}{i + 1} = \frac{-(d_i - j)^2 + d_i^2}{i +1}.\]

Observe that $d_i = \overleftarrow{\#\phi}(w, i) -\phi(w, i)$ takes values in $\{0, \ldots, n - 1\}$. Hence, for a fixed $i$, the quantity \eqref{get_rid} is uniquely maximized at $j = d_i$. In this way, we get $j = d_i$ to position $i$. Adding $\phi(w, i)$ to $d_i$, we get $\overleftarrow{\#\phi}(w, i)$.  To get $\overrightarrow{\#\phi}(w, i)$ to position $i$, we observe that:
     \begin{align*}
        \overrightarrow{\#\phi}(w, i) &= (\phi(w,0) + \ldots + \phi(w, n-1)) - ((\phi(w,0) + \ldots + \phi(w, i-1)) \\
        &=\overleftarrow{\#\phi}(w, n-1) - d_i.
    \end{align*}
    This is computable at position $i$ because $\overleftarrow{\#\phi}(w, n-1)$  can be ``sent'' to all positions via the attention function, always maximized at the last position (see the proof of Lemma \ref{last_pos}). 

Our next goal is: given a formula $\phi$, computable position-wise by some AHAT, and a unary numerical predicate $\Theta$, provide an AHAT that computes $\Theta(\overleftarrow{\#\phi})$ and $\Theta(\overrightarrow{\#\phi})$ position-wise. As we have already shown, we can assume that we already have counting terms $\overleftarrow{\#\phi}$ and $\overrightarrow{\#\phi}$ computed position-wise. Next, we create a layer with the following attention function:
 \[\langle A\vb_i, B\vb_j\rangle = 2j\cdot \overrightarrow{\#\phi}(w, i) - j^2 = -(j -   \overrightarrow{\#\phi}(w, i))^2 + \overrightarrow{\#\phi}(w, i)^2.
    \]
    Again, this is possible because this expression is a bilinear form of $\vb_i$ and $\vb_j$, due to our positional encoding. It is maximized at $j_i = \min\{n -1,  \overrightarrow{\#\phi}(w, i)\}$ (when the counting term is equal to $n$, since we do not have a position indexed by $n$, the maximizing position will be $j_i = n-1$). Having $\Theta$ included in the positional encoding, we can get $j_i$ and $\theta_n(j_i)$ to the $i$th position. 
    Observe that:
    \[\theta_n(\overrightarrow{\#\phi}(w, i)) = (\mathbb{I}\{\overrightarrow{\#\phi}(w, i) \le n - 1\}\land \theta_n(j_i))\lor (\lnot \mathbb{I}\{\overrightarrow{\#\phi}(w, i) \le n - 1\}\land \theta_n(n))\]
    Since in our positional encoding, $\theta_n(n)$ is included in every position, and since position-wise Boolean operations can be done by an AHAT, it remains to compute the indicator $\mathbb{I}\{\overrightarrow{\#\phi}(w, i) \le n - 1\}$. Transmitting $n$ once again to every position, we can write:
    \[\mathbb{I}\{\overrightarrow{\#\phi}(w, i) \le n - 1\} = \min\{1, n -\overrightarrow{\#\phi}(w, i)\}.\]
    
    This quantity can be computed by a composition of ReLU and affine transformations.
    We can get $\theta_n(\overleftarrow{\#\phi}(w, i))$  to the $i$th position analogously.

    Finally, we have to check that linear inequalities over counting terms can be done in AHAT. Given formulas $\phi_1, \ldots, \phi_k, \psi_1, \ldots, \psi_k$ already computed position-wise by some AHAT, we have to provide an AHAT that computes the formula $\sum_{j = 1}^ k c_j \cdot \overleftarrow{\#\phi_j} 
\, + \, 
\sum_{j=1}^k d_j \cdot \overrightarrow{\#\psi_j} 
\ \ge \ 0$ position-wise. After computing counting terms for $\phi_1,\ldots, \phi_k,\psi_1, \ldots, \psi_k$, we first can compute their linear combination, using affine position-wise transformations:
    \[l_i = \sum_{j = 1}^ k c_j \cdot \overleftarrow{\#\phi_j}(w, i)
\, + \, 
\sum_{j=1}^k d_j \cdot \overrightarrow{\#\psi_j}(w, i). \]
    Since coefficients are integral, $l_i$ is integral as well, so we get: \[\mathbb{I}\{l_i\ge 0\} = \max\{\min\{0, l_i\} + 1, 0\}.  \]
    The last expression can be computed via composition of ReLU and affine transformations.

 \end{proof}

 \begin{proof}[Proof of Corollary \ref{cor:parikh_ahat}]
We show that permutation-closed languages over binary alphabets and languages of the form $perm(L)$, where $L$ is a regular language, are expressible in $\LTLCP$.

First, assume that $L$ is a permutation-closed language over a binary alphabet $\{a, b\}$. Then whether or not a word $\bar w$ belongs to $L$ is determined by the length of $w$ and the number of $a$'s in $w$. In other words, there a numerical predicate $\Theta$ such that for every $n$ and for every $\bar w\in\{a, b\}^n$, we have $\bar w\in L$ if and only if $\theta_n(|\bar w|_a) = 1$ (recall that for a word $\bar w$ and for a letter $a$, the expression $|\bar w|_a$ denotes the number of occurrences of $a$ in $\bar w$). Thus, $L$ is expressible by the  formula $\Theta(\overrightarrow{\# a})$.

We now show that every language of the form $perm(L)$, where $L$ is regular, is expressible in $\LTLCP$.

As shown in \cite{Parikh}, if $L$ is a regular language over the alphabet $\ialphabet = \{a_1,\ldots,a_d\}$, then 
\[
perm(L) = \{ w : \Parikh(w) \in S \},
\]
for some semilinear set $S$ of dimension $d$. Semilinear sets correspond precisely to sets of tuples that are definable in Presburger Arithmetic (e.g. see \cite{Haase18}). See standard textbook in mathematical logic for more details on Presburger Arithmetic (e.g. see \cite{anderton-book}). Since Presburger Arithmetic admits quantifier-elimination, we may assume that $S$ is a boolean combination of (a) inequalities of linear combination of counting terms, and (b) modulo arithmetic on counting terms (i.e. an expression of the form $|w|_{a_i} \equiv k\pmod{c}$, for some concrete natural numbers $0 \leq k < c$ and $c > 0$).  For (b), one simply handles this using the formula $\Theta(\overrightarrow{\# a_i})$, where $\Theta$ is a unary numerical predicate consisting of all numbers $n$ such that $n \equiv k\pmod{c}$. For (a), take a linear inequality of the form
\[
    \psi(|w|_{a_1},\ldots,|w|_{a_d}) := \sum_{i=1}^d c_i |w|_{a_i} \geq 0,
\]
where $c_1,\ldots,c_d \in \mathbb{Z}$. Such a formula $\psi$ is already an atom permitted in $\LTLCP$. Since $\LTLCP$ is closed under boolean combination, it follows that $perm(L)$ is also in $\LTLCP$ and therefore, by Theorem \ref{thm:terms}, is in AHAT.
\end{proof}

\end{document}